\documentclass[11pt]{article}

\usepackage[english]{babel}
\usepackage[margin=1in]{geometry}

\usepackage{amsmath,amsfonts,amsthm,mathtools}
\usepackage{thmtools,mathrsfs}
\usepackage{xcolor}
\usepackage{booktabs}
\usepackage{tikz}
\usepackage{tabularx}
\usepackage[hidelinks]{hyperref}

\newcolumntype{Y}{>{\raggedright\arraybackslash}X}

\newcommand{\simplex}[1]{\Lambda_{#1-1}}
\newcommand{\rad}{r}

\newtheorem{theorem}{Theorem}[section]
\newtheorem{lemma}[theorem]{Lemma}
\newtheorem{corollary}[theorem]{Corollary}

\theoremstyle{definition}
\newtheorem{definition}{Definition}[section]

\theoremstyle{remark}
\newtheorem{remark}{Remark}[section]

\theoremstyle{plain}
\newtheorem{proposition}{Proposition}[section]

  {\par\medskip}

\newcommand{\vecs}{\mathbf{s}}
\newcommand{\vecu}{\mathbf{u}}

\newcommand{\vecdelta}{\boldsymbol{\delta}}

\title{\textbf{Robustness of Stable Matchings When Attributes and Salience Determine Preferences}}

\author{
Amit Ronen\thanks{Dept. of Computer Science, Bar-Ilan University; {\tt amit.ronen@biu.ac.il}} 
\and 
S.~S. Ravi\thanks{Biocomplexity Institute, University of Virginia and Department of Computer Science, University at Albany--State University of New York; {\tt ssravi0@gmail.com}}
\and 
Sarit Kraus\thanks{Dept. of Computer Science, Bar-Ilan University; {\tt sarit@cs.biu.ac.il}}
}

\date{} 

\begin{document}

\maketitle

\begin{abstract}
In many matching markets--such as athlete recruitment or academic 
admissions--participants on one side are evaluated by attribute vectors known to the other side, 
which in turn applies individual \emph{salience vectors} to assign relative importance to these attributes.  
Since saliences are known to change in practice, a central question arises: 
how robust is a stable matching to such perturbations?
We address several fundamental questions in this context.

First, we formalize robustness as a radius within which a stable matching remains immune to blocking pairs under any admissible perturbation of salience vectors (which are assumed to be normalized). 
Given a stable matching and a radius,
we present a polynomial-time algorithm to verify whether the matching is stable within the specified radius.
We also give a polynomial-time algorithm for computing the maximum robustness radius of a
given stable matching.
Further, we design an anytime search algorithm that uses certified lower and upper bounds 
to approximate the most robust stable matching, and we characterize the robustness-cost relationship through efficiently computable bounds that delineate the achievable tradeoff between robustness and cost.
Finally, we show that for each stable matching, the set of salience profiles that preserve its stability factors is a product of low-dimensional polytopes within the simplex. This geometric structure precisely characterizes the polyhedral shape of each robustness region; its volume can then be computed efficiently, with approximate methods available as the dimension grows, thereby linking robustness analysis in matching markets with classical tools from convex geometry.

\end{abstract}

\section{Introduction}\label{sec:intro}

\subsection{Background and Motivation}
\label{subsec:background}
Stable matching is a foundational model in market design~\cite{Roth1982}.
It underpins real-world allocation systems such as college admissions~\cite{GaleShapley62},
medical residency programs~\cite{RothPeranson99},
school choice~\cite{AbdulkadirogluSonmez03}, roommate allocation \cite{aziz2013stable}, student-project-resource-allocation \cite{ismaili2019student},
and kidney exchange~\cite{RothSonmezUnver04,liu2014internally}.
In the classic bipartite matching problem, two sets of agents must be paired, with each agent having preferences over potential partners on the other side. A matching is considered stable if no pair of agents would both prefer each other to their assigned partners. Traditionally, the model assumes that agents’ preferences are fixed,
complete, and independent~\cite{GusfieldIrving89}.

However, evidence from behavioral science challenges these assumptions \cite{TverskySimonson93}. Human preferences are not arbitrary; they are often dynamic, evolving with context, experience, and time. This raises an important question: if preferences change, can a previously stable matching dissolve (i.e., become unstable)?

Prior research has shown that stable matchings, when preferences are independent of one another, are surprisingly fragile to changes in preferences. Even when preferences are altered in very small ways, stable matchings that remain stable despite such perturbations - that is, matchings that are robust to preference changes - turn out to be exceedingly rare \cite{Chen2021matchings,Boehmer-etal-2024}. 

However, as noted above, agents’ preferences are not arbitrary. In this paper, we focus on settings where agents on one side evaluate potential counterparts through salience-weighted evaluations, where each agent assigns \emph{salience weights} to the observable attributes of potential counterparts~\cite{TeoSethuramanTan06}, reflecting their perceived importance. Crucially, the salience of attributes, and thus the agents’ preferences, may shift over time ~\cite{bordalo2012salience,BordaloGennaioliShleifer13}.

For instance, in college admissions, applicants are characterized by attributes such as 
Scholastic Aptitude Test (SAT) scores and GPA, which are known to colleges. Each college assigns varying levels of importance to these attributes and ranks applicants accordingly.
This motivates a shift in perspective. Rather than thinking about robustness in terms of small swaps in a ranked list, we propose viewing it through the lens of attribute salience. Robustness then becomes a geometric question: how far can these weights shift, within some tolerance, while still preserving stability? In this sense, robustness can be viewed as a \emph{second-order stability} property - a form 
of~ ``stability of stability'' that captures whether stable outcomes persist under changes in preferences.




\noindent
{\bf Model overview.}
We study balanced two-sided markets where side $A$ has static strict preferences and side $B$ induces strict lists by \emph{salience-weighted scores}. Each agent $a \in A$ has a vector $\vecu(a)$ with $m$ non-negative attribute values, for a constant number $m$ of attributes.
Let $\simplex{m}$ denote the $(m-1)$-dimensional probability simplex consisting of all the normalized vectors with $m$ non-negative components.
Each agent $b \in B$ has a salience vector $\vecs(b) \in \simplex{m}$, and
$b$ ranks each $a \in A$ by the scalar product
$\vecs(b)\!\cdot\!\vecu(a)$ along with a given tie-breaking rule.

The model is \emph{hybrid}: continuous in salience space yet ordinal in realized lists, letting us bring geometric tools to classical structures of stable matchings (e.g., deferred acceptance, the lattice of stable matchings, and rotations \cite{GaleShapley62,GusfieldIrving89}). 
The set of all stable matchings forms a distributive lattice ordered by the preferences of one side, ranging from the $B$-optimal to the $A$-optimal outcome. 
Within this lattice, \emph{rotations} represent minimal cyclic exchanges of partners that move monotonically from one stable matching to another, and serve as the fundamental units in our robustness analysis. 
Robustness in this framework is quantified geometrically by a radius~$r$ in an $\ell_p$ norm, capturing the maximum perturbation under which stability persists. 
In practice, decision makers rarely alter all \emph{salience weights} simultaneously, motivating a \emph{support budget}~$k$ that limits how many coordinates of~$\vecs(b)$ may change at once.

\subsection{Summary of Contributions}
\label{subsec:contrib}

While previous work on the robustness of
stable matchings (e.g., \cite{Chen2021matchings}) considered swapping entries in preference lists, our work considers robustness when preference lists are modified by perturbing salience vectors.
We study some fundamental
questions under this notion of robustness.
Our primary contributions are listed below. 

\smallskip 

\noindent
\underline{\textsf{1. A new notion of robustness using salience.}} 
When a salience vector
$\vecs(b)$ of an agent $b \in B$ is perturbed to $\mathbf{\hat{s}}(b)$, 
the distance $r$ between $\vecs(b)$ and 
$\mathbf{\hat{s}}(b)$ under some $\ell_p$ metric is  the \emph{radius}
of the perturbation.
Such perturbations could change the preference list of $b$ and may cause a  matching to become unstable.
We define a stable matching $\mu$ to be 
$(k,r,p)$-\emph{robust} if it remains stable when at most $k \leq m$ components of the salience vector of an agent are modified and the radius of this perturbation under the $\ell_p$
metric is at most $r$.
We examine several algorithmic and geometric problems based on this notion of robustness.

\smallskip 

\noindent
\underline{\textsf{2. Algorithm for robustness verification.}}~ 
Given a stable matching $\mu$, a non-negative real number $r$, a non-negative integer $k \leq m$ and
a value $p \in \{1, 2, \infty\}$,
we present an algorithm to verify whether
$\mu$ is $(k,r,p)$-robust.
For \emph{fixed} $m$, our algorithm
runs in polynomial time as the verification problem reduces to solving a polynomial number of convex programs (linear program (LP) or second-order cone program (SOCP) feasibility checks), each solvable in polynomial time (see e.g., \cite{BoydVandenberghe04,AlizadehGoldfarb03}).

\smallskip 

\noindent
\underline{\textsf{3. Algorithm for finding the maximum robustness 
radius.}}~
Extending the ideas used to solve the verification problem, we show that for any given a stable matching $\mu$,
the maximum radius $r^*(\mu)$ such that $\mu$ is $(k, r^*(\mu), p)$-robust can be computed in polynomial time for fixed $m$
and any $p \in \{1, 2, \infty\}$.

\smallskip 

\noindent
\underline{\textsf{4. Finding the most robust stable matching.}}~ Given the feature vectors of the agents in $A$ and the salience vectors of agents in $B$, we show that lower and upper bounds on the maximum robustness radius $r^*$ of a most robust stable matching can be computed efficiently.
Using these bounds, we present an anytime search algorithm that finds a 
stable matching whose robustness radius is within the computed bounds.
This search algorithm relies on several concepts associated with stable matchings (e.g., deferred acceptance, rotation poset;
see~\cite{GusfieldIrving89,Manlove13}
or Appendix~\ref{app:DA-rotations} for definitions of these concepts).

\smallskip 

\noindent
\underline{\textsf{5. Robustness-cost tradeoffs.}}~ We also examine an extension of robustness that incorporates a cost function capturing welfare, fairness, or other priorities relevant to applications. We provide a polynomial-time algorithm for finding a stable matching whose robustness radius exceeds a given threshold while minimizing the cost among all matchings meeting that robustness requirement.

\smallskip 

\noindent
\underline{\textsf{6. Describing the robustness region of a given matching.}}~
We use the term \emph{salience profile} to denote the matrix $S\in\mathbb{R}^{n\times m}$, where the $b$-th row is the salience vector $\vecs(b)^\top$. Given a matching~$\mu$, the \emph{robustness region} of~$\mu$ consists of all salience profiles~$S$ under which~$\mu$ remains stable. We show that this robustness region factors as a product of \(n\) low-dimensional polytopes within the simplex, fully characterizing its geometry. The appendix further derives how the volume of this region can be computed in polynomial time.

\smallskip 

A summary of the computational complexity of all main problems
appears in Table~\ref{tab:complexity} in the Appendix.

\subsection{Related Work} \label{subsec:relat-wo}

\noindent
\textbf{Foundations and structure.}
We rely on the classical lattice and rotation frameworks of stable matchings~\cite{GusfieldIrving89,Manlove13}, which provide the structural basis for our analysis. 
Polyhedral characterizations, in particular Rothblum’s stable-marriage polytope~\cite{Rothblum94}, provide the geometric foundation for our optimization formulations of robustness and radius computation.

\noindent
\textbf{Robustness under ordinal perturbations.}
A major line of research explores robustness under ordinal perturbations. Chen et al.~\cite{Chen2021matchings} formalize $d$-robustness via Kendall–$\tau$ distance (adjacent swaps) and develop polynomial algorithms to \emph{find} and \emph{optimize} $d$-robust matchings using the rotation partially ordered set (poset), with hardness emerging in the presence of ties. 
Mai and Vazirani~\cite{MaiVazirani2018robust} extend this perspective by considering robustness to \emph{uncertainty in the preference lists themselves}, developing algorithms that preserve stability under bounded perturbations of linear orders. 
In sharp contrast, Boehmer et al.~\cite{Boehmer-etal-2024} show that on large random instances, a stable matching typically fails to survive even a single adjacent swap.  
Earlier work introduced the notion of \emph{supermatches}~\cite{IJCAI2017Supermatch},
defining robustness in terms of repairability: a matching is robust if small disruptions
(i.e., a limited number of pair breakups) can be fixed by a nearby stable matching.
More recent rotation-based presolve algorithms~\cite{IJCAI2024RobustRotation}
use the rotation poset to efficiently identify matchings with maximal robustness
under such ordinal formulations. In contrast, our work introduces a continuous, attribute-based notion of robustness,
where stability is preserved against perturbations in agents’ salience weights
rather than discrete swaps in their ordinal preference lists.

\noindent
\textbf{Behavioral motivation.}
Beyond purely ordinal models, prior work links matching behavior to context-dependent \emph{salience} of attributes in behavioral decision theory~\cite{Bhatnagar-2008}. Similarly, many real-world matching systems implicitly rely on attribute-based evaluations.  
For example, in New York City’s centralized high-school match~\cite{NYCMatch2005}, programs assess applicants using transparent, multi-criteria rubrics (e.g., test performance, attendance, or neighborhood priority), and different schools emphasize these features to varying degrees -- effectively corresponding to distinct salience vectors.

\noindent
\textbf{Preference evolution and dynamic settings.}
Another line of work studies how preference profiles evolve over time.  
Echenique et al.~\cite{Echenique2024experimental} provide experimental evidence on decentralized matching processes with endogenously changing choices.  
Other recent models, such as Bredereck et al.~\cite{Bredereck2019adapting} 
and Alimudin and Ishida~\cite{Alimudin2022updating}, explore algorithmic adaptation of stable matchings under dynamically changing preference profiles.  
Similar adaptive-weight dynamics also appear in multi-agent learning,
where agents adjust feature weights over fixed attribute spaces to
adapt to changing environments or interaction patterns
(e.g.,~\cite{Muslimani2025R2N,JimenezRomero2025SwarmLLM}). 
Our framework complements these approaches by capturing preference change as a \emph{structured geometric perturbation} in salience space, bridging dynamic evolution and robust stability within a unified model.

\noindent
{\bf Roadmap} \label{subsec:roadmap}
We begin by defining the attribute-salience model and our notation (Section~\ref{sec:model}).  
We then study robustness for a given matching (Section~\ref{sec:verify-given}) and show how to compute its maximal robustness radius (Section~\ref{sec:radius-given}).  
Next, we develop algorithms for finding a stable matching that attains the maximal robustness radius (Section~\ref{sec:bound-radius}) and derive bounds on the relationship between robustness and cost (Section~\ref{sec:por}).  
Finally, we explore the geometric structure of robustness regions (Section~\ref{sec:geometry}).

\section{Model and Notation}\label{sec:model}
We define the attribute-salience matching setup (agents, preferences, attributes of $A$, salience of $B$), robustness notions, and notational conventions used throughout. For concreteness, a complete numerical example illustrating all the definitions is provided in Appendix~\ref{app:example}.

\noindent
\textbf{Agents and matchings.}
Let
$A=\{a_1,\dots,a_n\}$ and
$B=\{b_1,\dots,b_n\}$, with
$|A|=|B|=n$,
be two disjoint sets of agents. A \emph{matching} is a bijection $\mu:A\to B$; write
$\mu^{-1}(b)$ for $b$'s partner in $A$.

\noindent
\textbf{Side $A$: static preferences.}
Each agent $a\in A$ has a permanent, strict, complete order over $B$ that
never changes. We use the notation $b\succ_a b'$ 
to indicate that $a$ prefers $b$ to $b'$.

\noindent
\textbf{Attributes of $A$.} For each agent $a \in A$, there is a vector of attributes $\vecu(a)$ that characterizes $a$'s properties. These attributes are observable to agents on side $B$. The vector
$\vecu(a)$ consists of $m\ge 2$ observable attributes. Throughout this paper, we assume that $m$ is a \emph{fixed constant.} In our
notation, $\vecu(a)=(u_1(a),\dots,u_m(a))\in\mathbb{R}_{\geq 0}^m$. 

\smallskip 

\noindent
\textbf{Side $B$: salience vectors.}
Each agent $b\in B$ evaluates candidates by assigning weights to the observable attributes of $A$.  
Formally, we represent these priorities by a salience vector
\[
\vecs(b) = (s_1(b),\dots,s_m(b))\in\simplex{m}:=\{\vecs(b)\in\mathbb{R}_{\ge 0}^m:\sum_i s_i(b)=1\}.
\]

We denote the \emph{salience profile} of side~$B$ by the $n \times m$ matrix $S=(\vecs(b))_{b\in B}\in(\simplex{m})^n$, which collects the salience vectors of all agents in~$B$.
Agent $b \in B$ assigns to each $a\in A$ the score
$
\vecs(b)\!\cdot \vecu(a)=\sum_{i=1}^m s_i(b)\,u_i(a).
$
Agents in $B$ then rank candidates by decreasing score, with public strict tie-breaking orders $\prec^{\mathrm{tie}}$. We write $a\succ_b a'$ when $b$ prefers $a$ over $a'$.

\noindent
\textbf{Ordinal ranks of $A$ and $B$.}
For each agent $a\in A$, let $\operatorname{rank}_a(b)\in\{1,\dots,n\}$ denote the position of $b$ in $a$'s strict preference list (with a lower value indicating a more preferred agent).
For each agent $b\in B$, let $\operatorname{rank}_b(a)\in\{1,\dots,n\}$ denote the position of $a$ in $b$'s strict list induced by the salience rule (sorting $A$ by decreasing 
$\vecs(b)\!\cdot \vecu(\cdot)$ and breaking ties by $\prec^{\mathrm{tie}}$).
Thus $b\succ_a b'$ iff $\operatorname{rank}_a(b)<\operatorname{rank}_a(b')$, and $a\succ_b a'$ iff $\operatorname{rank}_b(a)<\operatorname{rank}_b(a')$.

\noindent
\textbf{Blocking pairs and stability.}
A pair $(a,b)\in A\times B$ is a \emph{blocking pair} for a matching $\mu$ if 
$
b\succ_a \mu(a)
\quad\text{and}\quad
a\succ_b \mu^{-1}(b).
$
A matching $\mu$ is \emph{stable} when no blocking pair exists.

\noindent
\textbf{Radius parameters and worst-case perturbations.}
Fix a support budget $k\in\{1,\dots,m\}$, a radius $\rad\ge 0$, and a norm $p\in\{1,2,\infty\}$. 
We allow a single agent $b \in B$ to \emph{modify} up to $k$ components of its salience vector. Formally, $b$ selects additive perturbations $\boldsymbol{\delta}\in\mathbb{R}^m$ with support 
$Q\subseteq[m]$, $|Q|\le k$, satisfying $s_i(b)+\delta_i\ge 0$ for all $i\in Q$, and defines
\[
T:=\sum_{i=1}^m \bigl(s_i(b)+\delta_i\bigr)>0.
\]
The perturbed vector is then normalized as
\[
\hat{\mathbf{s}}(b)=\frac{\mathbf{s}(b)+\boldsymbol{\delta}}{T}\in\simplex{m}.
\]
The perturbation radius $r$ is computed post-normalization,
\[
\|\hat{\mathbf{s}}(b)-\mathbf{s}(b)\|_p \le \rad.
\]

\noindent
A matching~$\mu$ is said to be \emph{$(k,\rad,p)$-robust} if it remains stable under all perturbations of the above form satisfying this bound for every agent~$b\in B$.
Note that $\mu$ is $(k,0,p)$-robust iff $\mu$ is stable at the salience profile $S$.

\noindent
\textbf{Pre- vs.\ post-normalization.}
An equivalent description replaces $T$ by its reciprocal $\lambda=1/T>0$. 
In this \emph{post-normalized} view, perturbations are specified by a vector 
$\hat{\mathbf{s}}(b)\in\simplex{m}$ and a scalar $\lambda>0$ such that 
\[
\hat s_i(b)=\lambda\, s_i(b)\quad (i\notin Q), \qquad 
\|\hat{\mathbf{s}}(b)-\mathbf{s}(b)\|_p \le \rad.
\]
Thus pre- and post-normalization are simply two parameterizations of the same admissible set, and they yield the same robustness radius.
Throughout the paper, we adopt the post-normalized form for clarity and consistency.

\noindent
\textbf{Score margin.}
To quantify how strongly $b$ prefers its current partner over another candidate $a$, 
we define the \emph{attribute-gap vector}
\[
\boldsymbol{\Delta}(b;a\mid\mu):=\vecu(\mu^{-1}(b))-\vecu(a).
\]
Given a salience profile $S$, the corresponding \emph{score margin} is
\[
\gamma_S(b;a\mid\mu)\ :=\ \vecs(b)\cdot\boldsymbol{\Delta}(b;a\mid\mu).
\]
The value $\gamma_S(b;a\mid\mu)$ is positive when $b$ prefers its current partner
$\mu^{-1}(b)$ to $a$ under profile $S$.
When $\mu$ and/or $S$ are clear from the context, we write simply $\gamma(b;a)$.

\noindent
\textbf{Base radius.}
The \emph{base inner radius} $\rad^{\mathrm{base}}(\mu)$ 
is a baseline robustness guarantee:
it denotes a perturbation level that is sufficient to keep~$\mu$ stable.
By construction $\rad^{\mathrm{base}}(\mu)\le\rad^*(\mu)$,
and it depends only on score margins and dual-norm attribute gaps,
independent of the support budget~$k$
(see Section~\ref{subsec:rot-base}).

\noindent
\textbf{Stability region.}\label{par:Pmu-notation}
For a given matching $\mu$, the \emph{stability region} $\mathcal{P}_\mu$ 
is the set of all salience profiles under which $\mu$ is stable. 
Formally, $\mathcal{P}_\mu\subseteq(\simplex{m})^n$; see Section~\ref{sec:geometry} for the explicit polyhedral form and product structure of 
$\mathcal{P}_\mu$.

\noindent
\textbf{Remark ($k=m-1$ vs.\ $k=m$).}
Since unchanged coordinates automatically adjust to preserve normalization, 
varying $m-1$ coordinates already spans all feasible perturbations.  
A higher support size $k=m$ matters only in boundary cases where the support itself changes 
(e.g., $m=2$, $(1,0)\!\to\!(0,1)$).

\noindent
For convenience, Table~\ref{tab:notation} in the Appendix provides a complete
notation summary of all symbols used throughout the paper.

\section{Verification for a Given Matching}\label{sec:verify-given}

Given a salience profile $S=\{\vecs(b)\}_{b\in B}$ and a matching $\mu$, the verification problem asks whether $\mu$ remains stable under 
\emph{every} admissible perturbation of the salience vector of a single agent $b\in B$ under the worst-case perturbation model.
The main result of this section is that the verification problem can be solved in polynomial time.

For $b\in B$, let $\mathcal{H}_\mu(b):=\{\,a\in A:\ b\succ_a \mu(a)\,\}$ be the set of $A$-agents who prefer $b$ to their current partner. We can now formalize the verification task more precisely by defining the robustness verification problem, which captures the requirement that no new blocking pair can emerge under bounded perturbations of the salience profile.

\begin{definition}[Robustness Verification (RV)]
Given a stable matching $\mu$, attributes $\{\vecu(a)\}_{a\in A}$, salience profile $S$, norm $p\in\{1,2,\infty\}$, support budget $k\le m$, and radius $r\ge 0$, 
decide if, for all $b\in B$, $a\in\mathcal{H}_\mu(b)$, and post-normalized perturbations $\hat{\vecs}(b)\in\simplex{m}$, 
the score margin satisfies the condition
$\gamma_{\hat S}(b;a)\ \ge 0$.
\end{definition}

Robustness verification ensures that no blocking pair can arise 
under any admissible perturbation of the salience vectors.
The next lemma expresses this condition as a local check:
for each~$b$ and each possible perturbation direction,
stability holds iff all resulting score margins remain nonnegative.

\begin{lemma}\label{lem:verify-poly}
For any integers $m$ and $k$ with
$k\le m$, a value $p\in\{1,2,\infty\}$, 
and a rational value $r\ge 0$, 
a stable matching $\mu$ is $(k,r,p)$-robust if and only if, 
for every $b\in B$, $a\in\mathcal{H}_\mu(b)$, and support set $Q\subseteq[m]$ with $|Q|\le k$, 
no admissible perturbation $\hat{\vecs}(b)$ yields a negative score margin, i.e.,
$\gamma_{\hat S}(b;a)<0$.
\end{lemma}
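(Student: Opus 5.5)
This is essentially an unfolding of definitions, so I will prove both directions by reducing a failure of robustness to a single agent, a single candidate, and a single support set. By definition, $\mu$ is $(k,r,p)$-robust iff for every $b\in B$ and every admissible perturbation $\hat{\vecs}(b)$ of $\vecs(b)$, the matching $\mu$ has no blocking pair under the perturbed profile $\hat S$. Here admissible means: support $Q$ with $|Q|\le k$, post-normalized form $\hat s_i(b)=\lambda s_i(b)$ for $i\notin Q$, $\hat{\vecs}(b)\in\simplex{m}$, and $\|\hat{\vecs}(b)-\vecs(b)\|_p\le r$. Since side $A$'s preferences are static and only one agent $b$ perturbs at a time, the plan is to show that any blocking pair created by such a perturbation must involve $b$ and some $a\in\mathcal{H}_\mu(b)$, and that it is detected exactly by the sign of $\gamma_{\hat S}(b;a)$.

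First, I would localize the blocking pair. Fix $b$ and an admissible $\hat{\vecs}(b)$, and let $(a',b')$ be a blocking pair under $\hat S$.
\begin{itemize}
\item If $b'\neq b$, then the preferences of $b'$ are unchanged, and so are all of $A$'s. Hence $(a',b')$ already blocks $\mu$ under $S$, contradicting stability.
\item So $b'=b$. The condition $b\succ_{a'}\mu(a')$ depends only on $A$'s static lists, so $a'\in\mathcal{H}_\mu(b)$.
\item The remaining condition is $a'\succ_b\mu^{-1}(b)$ under $\hat{\vecs}(b)$. By the scoring rule this means $\hat{\vecs}(b)\cdot\vecu(a')>\hat{\vecs}(b)\cdot\vecu(\mu^{-1}(b))$, or equality with $a'$ winning the tie-break. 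Equivalently $\gamma_{\hat S}(b;a')<0$, or $\gamma_{\hat S}(b;a')=0$ with a favorable tie-break.
\end{itemize}
Conversely, any $b$, $a\in\mathcal{H}_\mu(b)$ and admissible $\hat{\vecs}(b)$ with $\gamma_{\hat S}(b;a)<0$ gives a blocking pair $(a,b)$ directly, so $\mu$ is not robust.

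Second, I would group the quantification by support set. Every admissible perturbation has some support $Q\subseteq[m]$ with $|Q|\le k$, and each such $Q$ contributes a set of admissible $\hat{\vecs}(b)$. Quantifying over all admissible perturbations is therefore the same as quantifying over all pairs $(Q,\hat{\vecs}(b))$. Combined with the first step, this gives the stated characterization, up to the equality case.

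The main obstacle is the tie case $\gamma_{\hat S}(b;a)=0$, where the public tie-break could favor $a$ and create a blocking pair without a strictly negative margin. I would handle it consistently with the RV definition, which requires $\gamma_{\hat S}\ge 0$, i.e. treats ties at zero margin as non-blocking (boundary convention). Should a favorable tie-break be counted as blocking, I would perturb slightly instead. The admissible set for a fixed $Q$ is convex and contains $\vecs(b)$, where the margin is nonnegative. If the margin vanishes at a point strictly inside the radius, the linearity of $\gamma$ in $\hat{\vecs}(b)$ lets one move slightly to obtain a negative margin, unless $\gamma$ is identically zero on the feasible set. That degenerate situation is exactly what the convention is meant to absorb. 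I would state the convention explicitly and note that the two formulations then coincide.
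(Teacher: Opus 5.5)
Your proposal is correct and follows the same route as the paper's proof: both directions unfold the definition of $(k,r,p)$-robustness. Your explicit localization of any new blocking pair to the perturbed agent $b$ and some $a\in\mathcal{H}_\mu(b)$ spells out a step the paper leaves implicit, and your zero-margin convention plays the role of the paper's terse appeal to strict tie-breaking. One caution: your fallback ``perturb slightly'' argument works only when the zero margin lies in the relative interior of the admissible set, so zero margins attained only at distance exactly $r$, or only on the boundary of $\simplex{m}$, also genuinely require the convention, not just the case where $\gamma$ vanishes identically.
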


\begin{proof}
\vspace{-2pt}

(\emph{Only if}.) Assume $\mu$ is $(k,r,p)$-robust. 
Assume otherwise that there exist $b\in B$, $a\in\mathcal{H}_\mu(b)$, 
and an admissible perturbation $\hat{\vecs}(b)$ with $\|\hat{\vecs}(b)-\vecs(b)\|_p\le r$ 
such that $\gamma_{\hat S}(b;a)<0$. 
Then $(a,b)$ becomes a blocking pair after the perturbation, contradicting robustness. 
Hence no such perturbation exists.

(\emph{If}.) Conversely, assume that for all $b\in B$, $a\in\mathcal{H}_\mu(b)$, 
and supports $Q$ with $|Q|\le k$, every admissible perturbation $\hat{\vecs}(b)$ 
satisfies $\gamma_{\hat S}(b;a)\ge 0$. 
Thus each $b$ still weakly prefers its partner over any $a\in\mathcal{H}_\mu(b)$, 
and strict tie-breaking ensures no new blocking pair can arise. 
Therefore $\mu$ is $(k,r,p)$-robust.
\end{proof}

Having reduced the condition to finitely many checks, we now establish that the overall verification task admits a polynomial-time algorithm.
\begin{theorem}[Polynomial-time verification via support enumeration]\label{thm:verify-poly}
Let $m$ and $k\le m$ be fixed constants, $p\in\{1,2,\infty\}$, and $r\ge 0$. 
Verifying whether a given stable matching $\mu$ is $(k,r,p)$-robust can be done in polynomial time.
\end{theorem}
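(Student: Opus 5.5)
By Lemma~\ref{lem:verify-poly}, $(k,r,p)$-robustness is equivalent to a finite family of checks. Each check is indexed by a triple $(b,a,Q)$ with $b\in B$, $a\in\mathcal{H}_\mu(b)$, and $Q\subseteq[m]$, $|Q|\le k$. The plan is to count these triples, show each reduces to one convex feasibility problem of constant dimension, and conclude.

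\textbf{Counting.} There are $n$ choices of $b$ and at most $n$ choices of $a\in\mathcal{H}_\mu(b)$. Each set $\mathcal{H}_\mu(b)$ is computed directly from the static lists of side $A$ in $O(n)$ time per $b$. There are $\sum_{j\le k}\binom{m}{j}\le 2^m$ supports $Q$, a constant because $m$ is fixed. So there are $O(n^2 2^m)$ triples in total.

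\textbf{Each triple as one convex program.} Using the post-normalized parameterization, the program has variables $\hat{\vecs}\in\mathbb{R}^m$ and $\lambda>0$, subject to:
\begin{itemize}
\item $\hat{\vecs}\ge 0$ and $\sum_i \hat s_i=1$;
\item $\hat s_i=\lambda s_i(b)$ for all $i\notin Q$;
\item $\|\hat{\vecs}-\vecs(b)\|_p\le r$;
\item $\hat{\vecs}\cdot\boldsymbol{\Delta}(b;a\mid\mu)<0$.
\end{itemize}
All constraints except the norm ball are linear in $(\hat{\vecs},\lambda)$. For $p=1$ and $p=\infty$, the ball is linearized with auxiliary variables $t_i\ge|\hat s_i-s_i(b)|$, giving an LP. For $p=2$, it is a single second-order cone constraint, giving an SOCP. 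The program has $O(m)$ variables and constraints, with coefficients taken from the input, so it is solvable in time polynomial in the bit size (for example by the ellipsoid or interior-point method).

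\textbf{Main obstacle: the strict inequalities.} The constraints $\gamma<0$ and $\lambda>0$ make the feasible set non-closed. I would replace the feasibility question by the optimization
\[
\min \hat{\vecs}\cdot\boldsymbol{\Delta}(b;a\mid\mu)
\]
over the closed set obtained with $\lambda\ge 0$, and declare a violation iff the optimum is negative. The $\lambda$ relaxation is harmless when $Q\ne[m]$: the case $\lambda=0$ corresponds to the support-changing limit, and the objective is continuous. For LPs the optimum is exact and rational. For the SOCP ($p=2$) two routes are available. The first computes the optimum to additive precision $\epsilon$ and uses a rational-separation argument. The cleaner second route gives a closed-form treatment: minimize a linear function over the intersection of a Euclidean ball with an affine subspace, which is a projection computation, and then handle the nonnegativity faces by enumerating which coordinates are at zero (constantly many, since $m$ is fixed). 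I expect this exactness for $p=2$ to need the most care. The LP cases are routine.

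\textbf{Conclusion.} Output ``robust'' iff no triple yields a negative optimum. Correctness follows from Lemma~\ref{lem:verify-poly}. The running time is $O(n^2 2^m)$ times polynomial-time convex program solves, which is polynomial for fixed $m$ and $k$.
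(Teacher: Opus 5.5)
Your proposal is correct and follows the paper's proof. Both use Lemma~\ref{lem:verify-poly} to reduce robustness to polynomially many triples $(b,a,Q)$ and solve one constant-size LP (for $p\in\{1,\infty\}$) or SOCP (for $p=2$) per triple. Your version is more careful than the paper in two places:
\begin{itemize}
\item The paper simply replaces $<0$ by $\le 0$ in a feasibility test (Remark~\ref{rem:strict}). You instead minimize $\hat{\vecs}\cdot\boldsymbol{\Delta}(b;a\mid\mu)$ over the closed set with $\lambda\ge 0$. To justify the relaxation, say explicitly that $(\vecs(b),1)$ is feasible with $\lambda>0$. By convexity, every point with $\lambda=0$ is then a limit of feasible points with $\lambda>0$, so continuity of the objective finishes the argument.
\item The paper just appeals to interior-point solvers for $p=2$, whereas you address how to determine the sign of the optimum exactly.
\end{itemize}
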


\begin{proof} 
\vspace{-7pt}
The key idea is to translate each potential blocking deviation into a convex feasibility problem that tests whether a perturbation within distance~$r$ can violate stability. In particular, using Lemma~\ref{lem:verify-poly}, we reduce RV to testing the infeasibility of $O(n^2 m^k)$ convex optimization instances: a linear program (LP) when $p\in\{1,\infty\}$, or a second-order cone program (SOCP) when $p=2$. Since $m$ and $k$ are constants and the infeasibility of convex instances can be determined in polynomial time~\cite{BoydVandenberghe04,AlizadehGoldfarb03}, 
We obtain a polynomial-time algorithm for RV.

\smallskip
\noindent\emph{Explicit LP/SOCP formulations.}
For each $b\in B$, $a\in\mathcal{H}_\mu(b)$, and support $Q\subseteq[m]$ with $|Q|\le k$, 
we test whether there exists an admissible perturbation 
$\hat{\vecs}(b)\in\simplex{m}$ satisfying 
$\|\hat{\vecs}(b)-\vecs(b)\|_p\le r$
that makes $(a,b)$ a blocking pair, i.e.\ violates stability. Each feasibility check is a convex program - an LP when $p\in\{1,\infty\}$ or an SOCP when $p=2$ (see Appendix~\ref{app:SOCP} for the general SOCP form).
For illustration, the LP for $p=\infty$ is shown below; the remaining formulations appear in Appendix~\ref{app:verify-formulations}.

\smallskip
\noindent\emph{(a) $p=\infty$ (box distance) - LP feasibility for $(a,b,Q)$}
\[
\begin{aligned}
\text{find }&\hat{\vecs}(b),\ \lambda>0\\
\text{s.t. }&
\sum_i \hat s_i(b)=1,\quad \hat s_i(b)\ge0,\\
&\hat s_i(b)=\lambda\, s_i(b)\ (i\notin Q),\\
&\hat{\vecs}(b)\cdot\boldsymbol{\Delta}(b;a\mid\mu)\le0,\\
&-r\le \hat s_i(b)-s_i(b)\le r\quad(\forall i).
\end{aligned}
\]
\emph{Variables:} $\hat{\vecs}(b)\in\mathbb{R}^m_{\ge0}$, $\lambda\in\mathbb{R}_{>0}$.

\smallskip

Collectively, these feasibility programs cover all potential deviations across agents and supports.
Each instance tests whether a perturbation of magnitude at most~$r$ exists
that makes $(a,b)$ a blocking pair.  
If all are infeasible, $\mu$ is $(k,r,p)$-robust; otherwise, a feasible instance yields a blocking witness.  
Since LP and SOCP infeasibility can be decided in polynomial time 
by interior-point methods~\cite{AlizadehGoldfarb03,BoydVandenberghe04}, 
and since $m$ and $k$ are constants, the overall runtime is polynomial in~$n$. \qedhere
\end{proof}

\noindent When $k=m$ (full-support moves), support enumeration vanishes and the procedure reduces to $O(n^2)$ independent LP/SOCP checks indexed by $(a,b)$. 

\begin{remark}[Strict vs.\ non-strict inequalities]\label{rem:strict}
We use non-strict inequalities ($\le$) in the LP/SOCP formulations for solver compatibility. 
Strict tie-breaking ensures that this is equivalent to strict stability, 
since every unperturbed score margin $\gamma(b;a\mid\mu)$ is positive.
\end{remark}

\section{Maximum Radius of a  Matching}\label{sec:radius-given}
Having established how to verify robustness for a given radius~$\rad$, 
we now turn to the complementary task: 
determining the largest radius for which stability still holds.  
In other words, for a given stable matching~$\mu$, support size~$k$, and norm~$p$, 
we ask: \emph{what is the maximum perturbation radius under which~$\mu$ remains robust?}  
This value quantifies the exact tolerance of~$\mu$ to worst-case changes and serves as a natural definition of robustness.
Intuitively, the robustness radius is determined by the weakest link: 
the smallest perturbation, over all potentially blocking pairs $(a,b)$ and admissible supports $Q$, that makes $b$ prefer $a$ over its current partner $\mu^{-1}(b)$.
\begin{definition}[Pairwise thresholds and robustness radius]\label{def:pairwise-thresholds}
For each $(a,b,Q)$, let $\rad^{\min}(b;a\mid Q)$ denote the optimal value of the corresponding LP 
(for $p\in\{1,\infty\}$) or SOCP (for $p=2$). Aggregating over all admissible supports, define $\rad^{\min}(b;a)$ by
\[
\rad^{\min}(b;a) := \min_{\substack{Q\subseteq [m]\\ |Q|\le k}} \rad^{\min}(b;a\mid Q).
\]
The \emph{maximum robustness radius} of $\mu$ is then
\[
\rad^*(\mu) := \sup\{\,r\ge0:\ \mu \ \text{is $(k,r,p)$-robust}\,\}
= \min_{\substack{b\in B \\ a\in\mathcal{H}_\mu(b)}} \rad^{\min}(b;a).
\]
\end{definition}
The following lemma provides the basis for our polynomial-time algorithm for computing the maximum radius.
\begin{lemma}\label{lem:rmin-convex-prog}
For each fixed $m$ and $k$ with $k\le m$, any norm $p\in\{1,2,\infty\}$, 
and a stable matching $\mu$, each term $\rad^{\min}(b;a)$ from
Definition~\ref{def:pairwise-thresholds}
can be computed in polynomial time.
\end{lemma}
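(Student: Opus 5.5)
We compute $\rad^{\min}(b;a)$ by taking the minimum over all admissible supports $Q$, where each $\rad^{\min}(b;a\mid Q)$ is the optimal value of a convex program. Since $m$ and $k$ are constants, there are only $\sum_{j\le k}\binom{m}{j}=O(m^k)=O(1)$ supports. So it suffices to show that for fixed $(a,b,Q)$ the quantity
\[
\rad^{\min}(b;a\mid Q)=\inf\bigl\{\|\hat{\vecs}(b)-\vecs(b)\|_p:\ \hat{\vecs}(b)\in\simplex{m},\ \hat s_i(b)=\lambda s_i(b)\ (i\notin Q),\ \lambda>0,\ \hat{\vecs}(b)\cdot\boldsymbol{\Delta}(b;a\mid\mu)\le 0\bigr\}
\]
is the value of an LP (for $p\in\{1,\infty\}$) or an SOCP (for $p=2$) of size polynomial in the input. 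Such programs can be solved to optimality, or to any prescribed accuracy, in polynomial time. If the feasible set is empty, the value is $+\infty$ and the pair contributes nothing. Emptiness is itself detected by the same feasibility test as in Theorem~\ref{thm:verify-poly}.

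\textbf{Formulations.} We start from the feasibility systems of Theorem~\ref{thm:verify-poly} and turn the radius into the objective. The constraints are linear in the variables $(\hat{\vecs}(b),\lambda)$, since $s_i(b)$ is data and so $\hat s_i=\lambda s_i$ is linear.

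For $p=\infty$ we add a scalar $t$, minimize $t$, and impose $-t\le \hat s_i(b)-s_i(b)\le t$ for all $i$.

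For $p=1$ we add auxiliary variables $e_i\ge|\hat s_i(b)-s_i(b)|$ via two linear inequalities each, and minimize $\sum_i e_i$.

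For $p=2$ we minimize $t$ subject to the second-order cone constraint $\|\hat{\vecs}(b)-\vecs(b)\|_2\le t$, which is standard SOCP form (Appendix~\ref{app:SOCP}).

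In every case the number of variables and constraints is $O(m)$ and the coefficients are the input rationals $s_i(b)$ and $u_i(a)-u_i(\mu^{-1}(b))$. Interior-point or ellipsoid methods~\cite{BoydVandenberghe04,AlizadehGoldfarb03} therefore run in time polynomial in the bit size. For LPs they return the exact rational optimum.

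\textbf{Main obstacle: the strict constraint $\lambda>0$.} This makes the feasible set not closed, so we must argue that the infimum is attained, or at least equals the value of the closed relaxation $\lambda\ge0$.

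Any feasible point with $\lambda=0$ is a limit of feasible points with $\lambda>0$ (take convex combinations with a feasible point with $\lambda>0$, using convexity of the feasible region) whenever such a point exists. In that case the infimum over $\lambda>0$ equals the minimum over $\lambda\ge0$, by continuity of the norm objective.

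If the only feasible points have $\lambda=0$, then the coordinates outside $Q$ are forced to zero. We handle this edge case directly. Either we treat it as infeasible for that support, or we note that it coincides with a feasible point for a larger support when $k$ permits, consistent with the remark on $k=m-1$ versus $k=m$. Deciding which case holds is one extra LP: maximize $\lambda$ over the feasible set.

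Finally, minimizing the $O(1)$ values over $Q$ gives $\rad^{\min}(b;a)$ in polynomial time. Repeating this for all $O(n^2)$ pairs $(a,b)$ with $a\in\mathcal{H}_\mu(b)$ yields $\rad^*(\mu)$, as in Definition~\ref{def:pairwise-thresholds}.
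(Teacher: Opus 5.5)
Your proposal is correct and follows the same route as the paper: enumerate the $O(m^k)$ supports $Q$, turn each verification LP/SOCP into an optimization problem with the radius as the objective, solve each of these constantly many programs in polynomial time, and take the minimum over $Q$. Your extra handling of the strict constraint $\lambda>0$ and of infeasible supports fills in details the paper's proof leaves implicit; for $\lambda>0$ you use the closure argument and treat separately the degenerate case where only $\lambda=0$ is feasible.
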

\begin{proof}
For a given $a$, $b$, and support $Q\subseteq[m]$, 
$\rad^{\min}(b;a\mid Q)$ is obtained by minimizing the radius~$r$
over admissible perturbations of~$\vecs(b)$ supported on~$Q$ 
that make $(a,b)$ a blocking pair.  
Formally, each $\rad^{\min}(b;a\mid Q)$ is the optimal value of a convex optimization instance, that is:
a linear program (LP) when $p\in\{1,\infty\}$, 
or a second-order cone program (SOCP) when $p=2$.

The explicit LP/SOCP formulations for all cases are provided in Appendix~\ref{app:radius-formulations}.  
Their construction is identical to the verification instances 
of Section~\ref{sec:verify-given}, 
except for the following modifications:
\begin{enumerate}
    \item the radius $r$ is now a decision variable,
    \item the objective is to minimize~$r$, and
    \item the additional constraint $r\ge0$ is included.
\end{enumerate}

Since the feasibility of each convex optimization instance is decidable in polynomial time and there are $O(m^k)$ supports per pair $(a,b)$ for given $k<m$, 
each $\rad^{\min}(b;a)$ can be computed in polynomial time. 
When $k=m$, support enumeration collapses to a single program per $(a,b)$, so the complexity reduces to one convex program.
\end{proof}
Building on Lemma~\ref{lem:rmin-convex-prog}, we can now establish the main result of this section.
\begin{theorem}[Polynomial-time computation via pairwise thresholds]\label{thm:radius-poly}
Let $m$ and $k\le m$ be fixed constants, and let $p\in\{1,2,\infty\}$. 
For any stable matching $\mu$, the robustness radius $\rad^*(\mu)$  is computable in polynomial time.
\end{theorem}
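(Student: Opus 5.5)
The approach is to compute $\rad^*(\mu)$ directly from the formula in Definition~\ref{def:pairwise-thresholds}, namely
\[
\rad^*(\mu)=\min_{b\in B,\ a\in\mathcal{H}_\mu(b)}\rad^{\min}(b;a),
\]
and to invoke Lemma~\ref{lem:rmin-convex-prog} for each term.

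\emph{Step 1: build the index set.} From the static preference lists of side $A$, we compute $\mathcal{H}_\mu(b)$ for every $b\in B$. This takes $O(n^2)$ comparisons of ranks $\operatorname{rank}_a(b)$ versus $\operatorname{rank}_a(\mu(a))$. It yields at most $n^2$ pairs $(a,b)$.

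\emph{Step 2: compute each threshold.} For each pair $(a,b)$ and each support $Q$ with $|Q|\le k$, we solve the minimize-$r$ version of the verification program from Theorem~\ref{thm:verify-poly}. This is an LP for $p\in\{1,\infty\}$ and an SOCP for $p=2$. If the program is infeasible, we set $\rad^{\min}(b;a\mid Q)=+\infty$. Taking the minimum over the $O(m^k)=O(1)$ supports gives $\rad^{\min}(b;a)$. By Lemma~\ref{lem:rmin-convex-prog}, this costs polynomial time per pair.

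\emph{Step 3: aggregate.} Taking the minimum over the $O(n^2)$ pairs costs $O(n^2)$ additional time. The total is $O(n^2 m^k)$ polynomial-time convex programs, which is polynomial for fixed $m$ and $k$. If $\mathcal{H}_\mu(b)=\emptyset$ for every $b$, then $\rad^*=+\infty$, or equivalently the diameter of the simplex.

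\emph{Step 4: correctness of the identity.} I expect this to be the main obstacle, since it is stated in the definition but deserves justification. I would prove the two inequalities separately.

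\begin{itemize}
\item \emph{Upper bound.} For any $r<\min\rad^{\min}(b;a)$, no admissible perturbation of radius at most $r$ makes any score margin $\gamma_{\hat S}(b;a)$ negative. Lemma~\ref{lem:verify-poly} then gives $(k,r,p)$-robustness.
\item \emph{Lower bound.} For $r>\rad^{\min}(b;a\mid Q)$, there is a feasible witness with $\hat{\vecs}(b)\cdot\boldsymbol{\Delta}\le 0$. I would push it slightly further along the segment toward a strictly violating point, within the slack of $r$, so that $(a,b)$ becomes genuinely blocking. Here I would rely on Remark~\ref{rem:strict}, and handle ties through the strict tie-breaking rule.
\end{itemize}

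A technical point is the strict constraint $\lambda>0$. I would argue that the infimum is attained or approached within the closed set, so the optimal value is well-defined. I would also note that interior-point methods return the value to any desired precision, or exactly for LPs, which suffices for polynomial-time computation.
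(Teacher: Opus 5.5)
Your proposal is correct and is essentially the paper's proof: compute each $\rad^{\min}(b;a)$ via Lemma~\ref{lem:rmin-convex-prog} from $O(m^k)$ LP/SOCP instances, then minimize over the $O(n^2)$ pairs, for $O(n^2m^k)$ polynomial-time convex programs in total. Your Step~4 and the $\lambda>0$ discussion go beyond the paper, which simply builds the identity $\rad^*(\mu)=\min_{b,a}\rad^{\min}(b;a)$ into Definition~\ref{def:pairwise-thresholds}; the one soft spot there, shared with the paper, is the degenerate case where no admissible point has strictly negative margin, because your ``push'' argument then has nothing to push toward and whether $(a,b)$ blocks at a zero-margin point genuinely depends on the tie-breaking order.
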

\begin{proof}
\vspace{-5pt}

For any stable matching $\mu$, the robustness radius $\rad^*(\mu)$ is given by 
Definition~\ref{def:pairwise-thresholds}. As shown above, for each pair $(a,b)$,
the value $\rad^{\min}(b;a)$ can be computed by solving $O(m^k)$ 
convex optimization instances. 
Since there are $O(n^2)$ pairs
$\rad^*(\mu)$ can be found by solving  $O(n^2 m^k)$ convex  instances.
Since $m$ and $k$ are fixed, and each convex instance can be solved in polynomial time, the theorem follows.
As explained earlier, the number of convex instances 
reduces to $O(n^2)$ when $k=m$.    
\end{proof}

{\sloppy \section{Finding the Most Robust Matching: Bounds and Anytime Search}}\label{sec:bound-radius}
\noindent In the previous sections, we assumed that a matching had already been selected and examined its robustness properties.
However, it is often more advantageous to identify in advance the most robust stable matching.
Therefore, our next goal is to \emph{select the stable matching with the largest robustness radius}, namely
\[
\mu^\star \;\in\; \arg\max_{\mu\in\mathcal{SM}}\ \rad^*(\mu),
\]
where $\mathcal{SM}$ denotes the set of all stable matchings of the market and $\rad^*(\mu)$ is the exact robustness radius from Section~\ref{sec:radius-given}.
A challenge arises because the size of $\mathcal{SM}$ can be exponential in~$n$. A classical way to explore this space is via the \emph{rotation poset}~\cite{GusfieldIrving89,Manlove13}, which compactly represents exponentially many stable matchings and supports polynomial-time traversal and optimization under structural criteria (see Section~\ref{subsec:rot-base} for details).
Unfortunately, $\rad^*(\mu)$ itself does not align with the rotation-poset structure: perturbations that create blocking pairs need not correspond to a single rotation or any set of rotations. To address this misalignment, we introduce the \emph{base inner radius} $\rad^{\mathrm{base}}(\mu)$, a conservative proxy satisfying $\rad^{\mathrm{base}}(\mu)< \rad^*(\mu)$.  
This proxy can be expressed in closed form from local score margins and dual gaps, and its constraints align with the rotation poset, enabling efficient search for highly robust matchings.  
While $\rad^*(\mu)$ can be computed by solving $O(n^2 m^k)$ convex optimization instances, $\rad^{\mathrm{base}}$ can be found in $O(n^2 m)$ time.

In this section, Section~\ref{subsec:rot-base} recalls the classical rotation-poset machinery and defines $\rad^{\mathrm{base}}$, together with its structural and computational properties. In sections~\ref{subsec:LB-muB}-\ref{subsec:anytime}, we tackle the main problem of finding the most robust stable matching, combining lower/upper bounds with an anytime search.

\subsection{Rotations and the base inner radius}\label{subsec:rot-base}

\textbf{Classical background.}
We recall standard notions from stable matching; full details appear in Appendix~\ref{app:DA-rotations} (see also \cite{GusfieldIrving89,Manlove13}).  
$B$-proposing DA (deferred acceptance) returns the $B$-optimal matching $\mu_B$.  
The set of stable matchings forms a distributive lattice with $\mu_A \preceq \mu \preceq \mu_B$.  
A \emph{rotation $\rho$ exposed at $\mu$} is a cyclic sequence of pairs $((a_1,b_1),\dots,(a_\nu,b_\nu))$ whose elimination yields another stable matching $\mu'=\operatorname{elim}(\mu,\rho)$. 
Along a rotation, the $B$-side weakly improves and the $A$-side weakly worsens.  
Rotations admit a partial order; every stable $\mu$ can be written as $\mu=\operatorname{elim}(\mu_A,D)$ for a unique down-set $D$; there are $O(n^2)$ rotations in total, and the rotation poset can be built in $O(n^2)$ time.

\noindent
\textbf{Deterministic inner radius from score margins.}
Set $p\in\{1,2,\infty\}$ and let $p^\star$ denote its dual norm, defined by $1/p+1/p^\star=1$.
Recall the score margin $\gamma(b;a\mid\mu)$ from Section~\ref{sec:model}, 
which captures how strongly $b$ prefers its partner over~$a$.  
Together with the partner-dependent dual gap
\[
U_{p^\star}(b)\ :=\ \max_{a'\neq \mu^{-1}(b)} 
   \|\vecu(\mu^{-1}(b))-\vecu(a')\|_{p^\star},
\]
we obtain the \emph{base inner radius}:
\begin{equation}\label{eq:base-radius}
\rad^{\mathrm{base}}(\mu)\ :=\ (1-\varepsilon_{\mathrm{base}})
\min_{b\in B}\ \min_{a:\,\mu^{-1}(b)\succ_b a}\ 
\frac{\gamma(b;a)}{U_{p^\star}(b)}.
\end{equation}

Here $\varepsilon_{\mathrm{base}}\in(0,1)$ is a small fixed constant 
used only to ensure strict feasibility of inequalities.  
We also assume that the attribute vectors $\{\vecu(a):a\in A\}$ are not all identical in their attributes, 
so that $U_{p^\star}(b)>0$ for every $b\in B$.

\begin{lemma}[Margin $\Rightarrow$ radius]\label{lem:margin-radius}
If $\rad\le\rad^{\mathrm{base}}(\mu)$, then $\mu$ remains stable under all
$\ell_p$ perturbations of radius $\rad$.
\end{lemma}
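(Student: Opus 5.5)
The plan is a Hölder (dual-norm) argument applied pairwise, reduced to the single-agent check of Lemma~\ref{lem:verify-poly}. By that lemma, it suffices to show the following. Fix $b\in B$, $a\in\mathcal{H}_\mu(b)$, and an admissible post-normalized perturbation $\hat{\vecs}(b)\in\simplex{m}$ with $\|\hat{\vecs}(b)-\vecs(b)\|_p\le\rad$. Then $\gamma_{\hat S}(b;a)=\hat{\vecs}(b)\cdot\boldsymbol{\Delta}(b;a\mid\mu)>0$. The support restriction $|Q|\le k$ and the normalization structure $\hat s_i=\lambda s_i$ for $i\notin Q$ can simply be discarded: we prove the bound for \emph{every} $\hat{\vecs}(b)$ within $\ell_p$-distance $\rad$, which only strengthens the claim. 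This also explains why $\rad^{\mathrm{base}}$ is independent of $k$.

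First, because $\mu$ is stable at $S$ and $a\in\mathcal{H}_\mu(b)$ (so $b\succ_a\mu(a)$), $b$ must strictly prefer $\mu^{-1}(b)$ to $a$. Hence $a$ is one of the agents ranging in the inner minimum of \eqref{eq:base-radius}. Strict tie-breaking together with Remark~\ref{rem:strict} gives $\gamma(b;a)>0$; if the two scores are tied I would fall back on the tie-break convention. Next, write $\vecdelta:=\hat{\vecs}(b)-\vecs(b)$ and decompose
\[
\gamma_{\hat S}(b;a)=\gamma(b;a)+\vecdelta\cdot\boldsymbol{\Delta}(b;a\mid\mu)\ \ge\ \gamma(b;a)-\|\vecdelta\|_p\,\|\boldsymbol{\Delta}(b;a\mid\mu)\|_{p^\star},
\]
where the inequality is Hölder's inequality for the dual pair $(p,p^\star)$, with the conventions $1^\star=\infty$ and $\infty^\star=1$. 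Since $a\neq\mu^{-1}(b)$, the definition of $U_{p^\star}(b)$ gives $\|\boldsymbol{\Delta}(b;a\mid\mu)\|_{p^\star}\le U_{p^\star}(b)$. Combining these,
\[
\gamma_{\hat S}(b;a)\ \ge\ \gamma(b;a)-\rad\,U_{p^\star}(b)\ \ge\ \gamma(b;a)-(1-\varepsilon_{\mathrm{base}})\frac{\gamma(b;a)}{U_{p^\star}(b)}U_{p^\star}(b)=\varepsilon_{\mathrm{base}}\gamma(b;a)>0 .
\]
The second inequality uses $\rad\le\rad^{\mathrm{base}}(\mu)\le(1-\varepsilon_{\mathrm{base}})\gamma(b;a)/U_{p^\star}(b)$, which holds because the minimum in \eqref{eq:base-radius} is at most its $(b,a)$ term, together with $U_{p^\star}(b)>0$.

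The score margin therefore stays strictly positive, so $b$ still strictly prefers its partner and $(a,b)$ cannot block. Pairs with $a\notin\mathcal{H}_\mu(b)$ cannot block regardless, because $A$-side preferences are static. Since perturbations affect one agent $b$ at a time, as in the robustness definition, and the argument holds for every $b$, $\mu$ remains stable.

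I expect the only real subtlety to be the positivity of $\gamma(b;a)$, that is, strict versus weak preference when scores tie exactly. If $\gamma(b;a)=0$ with the tie broken in favor of $\mu^{-1}(b)$, then $\rad^{\mathrm{base}}(\mu)=0$ and the claim reduces to stability at $\rad=0$, which holds by assumption. This case should be noted explicitly rather than absorbed into the chain of inequalities. The factor $\varepsilon_{\mathrm{base}}$ is what makes the final inequality strict, so no boundary ties are created by the perturbation.
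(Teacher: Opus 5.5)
Your proof is correct and follows the same argument as the paper. You write the perturbed margin as $\gamma(b;a)$ plus a perturbation term, bound that term via Hölder and $U_{p^\star}(b)$, and conclude positivity from $\rad\le\rad^{\mathrm{base}}(\mu)$. You are more careful than the paper in three places: restricting to $a\in\mathcal{H}_\mu(b)$, handling the tie case $\gamma(b;a)=0$, and making explicit the role of $\varepsilon_{\mathrm{base}}$.
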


\begin{proof}
\vspace{-8pt}

For each $b$ and $a\neq\mu^{-1}(b)$, write
\[
\hat{\vecs}(b)\cdot(\vecu(\mu^{-1}(b))-\vecu(a))
=\gamma(b;a)+(\hat{\vecs}(b)-\vecs(b))\cdot\boldsymbol{\Delta}(b;a\mid\mu).
\]
By Hölder’s inequality~\cite[App.~A.1.6]{BoydVandenberghe04}), $|(\hat{\vecs}(b)-\vecs(b))\cdot\boldsymbol{\Delta}(b;a\mid\mu)|
\le \|\hat{\vecs}(b)-\vecs(b)\|_p\,\|\boldsymbol{\Delta}(b;a\mid\mu)\|_{p^\star}
\le \rad\,U_{p^\star}(b)$.  
Hence the perturbation term is at least $-\rad U_{p^\star}(b)$.  
Thus the residual margin is $\ge \gamma(b;a)-\rad U_{p^\star}(b)$.  
If $\rad\le\rad^{\mathrm{base}}(\mu)$ this remains positive, hence stability is preserved
\end{proof}

\smallskip
\noindent
As an immediate consequence, we obtain the following corollary.

\begin{corollary}[Base inner radius]\label{cor:base-inner}
If $\rad\le \rad^{\mathrm{base}}(\mu)$, then $\mu$ remains stable 
under all perturbations $\hat{\vecs}(b)$ with 
$\|\hat{\vecs}(b)-\vecs(b)\|_p \le \rad$ for every $b\in B$.  
Equivalently, $\rad^*(\mu)> \rad^{\mathrm{base}}(\mu)$.
\end{corollary}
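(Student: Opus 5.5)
The first claim of Corollary~\ref{cor:base-inner} follows from Lemma~\ref{lem:margin-radius} once one point is checked: the lemma's perturbation class contains the admissible class of Section~\ref{sec:model}. An admissible perturbation of $b$ is a post-normalized vector $\hat{\vecs}(b)\in\simplex{m}$ supported on some $Q$ with $|Q|\le k$, together with a scalar $\lambda$, subject to $\|\hat{\vecs}(b)-\vecs(b)\|_p\le\rad$. The support and $\lambda$-constraints only shrink the set of allowed $\hat{\vecs}(b)$, while the Hölder argument in the lemma uses nothing beyond the norm bound. So for every admissible $\hat{\vecs}(b)$ and every $a\neq\mu^{-1}(b)$ we get
\[
\gamma_{\hat S}(b;a)\ \ge\ \gamma(b;a)-\rad\,U_{p^\star}(b).
\]
Since $\mu$ is stable, any $a$ that could block with $b$, namely any $a\in\mathcal{H}_\mu(b)$, satisfies $\mu^{-1}(b)\succ_b a$. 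Therefore $a$ appears in the inner minimum of \eqref{eq:base-radius}.

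The next step is to get a strictly positive lower bound from the factor $(1-\varepsilon_{\mathrm{base}})$. Write $\rho:=\min_b\min_{a}\gamma(b;a)/U_{p^\star}(b)$. By Remark~\ref{rem:strict}, every unperturbed margin is positive, so $\rho>0$ and $\rad^{\mathrm{base}}(\mu)=(1-\varepsilon_{\mathrm{base}})\rho<\rho$. Then for $\rad\le\rad^{\mathrm{base}}(\mu)$ we have
\[
\gamma(b;a)-\rad\,U_{p^\star}(b)\ \ge\ (\rho-\rad)\,U_{p^\star}(b)\ \ge\ \varepsilon_{\mathrm{base}}\,\rho\,U_{p^\star}(b)\ >\ 0,
\]
using $U_{p^\star}(b)>0$. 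So no pair in $\mathcal{H}_\mu(b)$ becomes blocking. Lemma~\ref{lem:verify-poly} then shows that $\mu$ is $(k,\rad,p)$-robust for every $k$, and this holds for each $b$ separately.

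For the equivalent formulation $\rad^*(\mu)>\rad^{\mathrm{base}}(\mu)$, which is the step I expect to require care, weak inequality follows directly from the definition of $\rad^*$ as a supremum. For strictness I would show that robustness still holds at a slightly larger radius, namely any $\rad'$ with $\rad^{\mathrm{base}}(\mu)<\rad'<\rho$. Rerunning the display above with $\rad'$ in place of $\rad$ keeps the margin strictly positive, so $\mu$ is $(k,\rad',p)$-robust and hence $\rad^*(\mu)\ge\rad'>\rad^{\mathrm{base}}(\mu)$. The main obstacle is that this argument requires $\rho$ to be attained as a strictly positive minimum. That holds because the minimum is over finitely many pairs with positive margins, and because the $U_{p^\star}(b)>0$ assumption rules out division by zero.
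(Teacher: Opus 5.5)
Your proposal is correct and follows the paper's route: the corollary is read off from the H\"older bound of Lemma~\ref{lem:margin-radius}, and stability guarantees that every candidate blocker $a\in\mathcal{H}_\mu(b)$ lies in the minimum defining $\rad^{\mathrm{base}}(\mu)$. Your extra step uses the $(1-\varepsilon_{\mathrm{base}})$ slack to certify robustness at every radius below your $\rho$, giving $\rad^*(\mu)\ge\rho>\rad^{\mathrm{base}}(\mu)$; this spells out the strict inequality that the paper only asserts, and, like the paper, it depends on the positivity of all unperturbed margins stated in Remark~\ref{rem:strict}.
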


\subsection{A polynomial-time computable lower bound via $B$-optimal matching}\label{subsec:LB-muB}

To search for the most robust stable matching, it is essential to start from a
guaranteed baseline. A natural choice is the $B$-optimal matching $\mu_B$,
the outcome of $B$-proposing deferred acceptance. By construction, each
$b\in B$ receives its most preferred attainable partner among all stable matchings (a classical property of $\mu_B$)~\cite{Manlove13}. This maximizes the score
margins $\gamma(b;a)$ against less-preferred candidates, and hence yields
the largest base inner radius $\rad^{\mathrm{base}}(\mu)$ across the lattice
of stable matchings.

Since $\mu_B$ is itself a stable matching, its robustness radius directly 
yields a lower bound:
\begin{equation}
\mathbf{LB} := \rad^*(\mu_B)\ \le\ \max_{\mu\in\mathcal{SM}} \rad^*(\mu).
\end{equation}

\begin{proposition}[Polynomial-time LB]\label{prop:LB-muB}
The lower bound $\mathbf{LB}=\rad^*(\mu_B)$ is computable in polynomial
time by solving $O\!\bigl(n^2\binom{m}{k}\bigr)$ LP/SOCP instances
from Section~\ref{sec:radius-given}, which reduces to $O(n^2)$ when $k=m$.
\end{proposition}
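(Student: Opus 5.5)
The proof has two parts. First we compute $\mu_B$ in polynomial time. Then we apply the exact radius computation of Section~\ref{sec:radius-given} to $\mu_B$.

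\emph{Step 1: building the preference lists.} For each $b\in B$ and $a\in A$, compute the score $\vecs(b)\cdot\vecu(a)$. This costs $O(m)$ per pair, so $O(n^2m)$ in total. Sort each $b$'s list by decreasing score, breaking ties with the public order $\prec^{\mathrm{tie}}$; this takes $O(n^2\log n)$ overall. Side $A$'s lists are given as input.

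\emph{Step 2: computing $\mu_B$.} Run $B$-proposing deferred acceptance, which takes $O(n^2)$ time. By the classical theory recalled in Section~\ref{subsec:rot-base}, its output is the $B$-optimal stable matching $\mu_B$. In particular, $\mu_B$ is stable, so $\mu_B\in\mathcal{SM}$.

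\emph{Step 3: computing $\rad^*(\mu_B)$.} Apply Definition~\ref{def:pairwise-thresholds} and Lemma~\ref{lem:rmin-convex-prog} to $\mu_B$. For each $b\in B$, compute $\mathcal{H}_{\mu_B}(b)$ in $O(n)$ time by comparing ranks in the lists of side $A$. For each $a\in\mathcal{H}_{\mu_B}(b)$ and each support $Q$ with $|Q|\le k$, solve the LP (for $p\in\{1,\infty\}$) or SOCP (for $p=2$) that minimizes $r$. Then take the minimum over all the resulting values $\rad^{\min}(b;a\mid Q)$.

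\emph{Counting instances.} There are at most $n^2$ pairs $(a,b)$.

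\begin{itemize}
\item If we restrict to supports of size exactly $k$, there are $\binom{m}{k}$ of them. Such a restriction is justified because the feasible set is monotone in $Q$: enlarging the support only relaxes the constraint $\hat s_i=\lambda s_i$ for $i\notin Q$. Hence the minimum over $|Q|\le k$ is attained at some $|Q|=k$ when $k\le m$. This gives $O(n^2\binom{m}{k})$ instances.
\item When $k=m$ there is a single support, so only $O(n^2)$ instances remain.
\end{itemize}

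Each instance has $O(m)$ variables and constraints with rational data, so it is solvable in polynomial time~\cite{BoydVandenberghe04,AlizadehGoldfarb03}. Theorem~\ref{thm:radius-poly} then yields polynomial total time.

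\emph{Validity of the bound.} The inequality $\mathbf{LB}\le\max_{\mu\in\mathcal{SM}}\rad^*(\mu)$ holds simply because $\mu_B\in\mathcal{SM}$.

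\emph{Main obstacle.} The only delicate point is justifying the $\binom{m}{k}$ count rather than $\sum_{j\le k}\binom{m}{j}$. I plan to handle it with the monotonicity-in-$Q$ argument above. Even without that argument, the count is $O(m^k)$, which is constant for fixed $m$, so polynomiality is unaffected.
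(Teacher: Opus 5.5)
Your proposal is correct and matches the paper's implicit argument. The paper gives no separate proof: it computes $\mu_B$ by $B$-proposing deferred acceptance and applies the pairwise-threshold computation of Theorem~\ref{thm:radius-poly} to it. Your monotonicity-in-$Q$ observation makes explicit a step the paper uses without comment, namely that only the $\binom{m}{k}$ supports of size exactly $k$ need checking, and a single program suffices when $k=m$.
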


\noindent
This gives a tractable starting point: although $\mu_B$ need not maximize robustness, its radius is efficiently computable and provides a guaranteed
baseline against which all further improvements can be evaluated.

\subsection{A relaxation-based global upper bound}\label{subsec:UB-global}

To upper-bound $\max_\mu \rad^*(\mu)$ we relax integrality, working with 
the \emph{stable-marriage polytope} $\mathrm{conv}(\mathcal{SM})$ \cite{Rothblum94}.  
This polytope is the convex hull of incidence matrices 
$X^\mu\in\{0,1\}^{A\times B}$ that encode stable matchings $\mu$, 
where $X^\mu_{ab}=1$ if $a$ is matched to $b$ under $\mu$ and $0$ otherwise.

Although the convex hull may have exponentially many vertices (since the number of stable matchings can be exponential in $n$), the stable-matching polytope admits a compact linear description of polynomial size~\cite{Rothblum94}. Hence, all computations over this convex relaxation remain tractable. Thus, instead of a single integral matching, we allow fractional convex combinations of stable matchings, which yields a tractable convex relaxation.

For each $b\in B$, define the \emph{fractional partner attribute vector}
\[
\bar{\vecu}_b \;=\; \sum_{a\in A} X_{ab}\,\vecu(a), \qquad X\in\mathrm{conv}(\mathcal{SM}).
\]
Intuitively, $\bar{\vecu}_b$ averages the attribute vectors of $b$'s partners across
the mixture $X$. Robustness of radius $r\ge0$ requires that even under post-normalized
$\ell_p$ perturbations of radius $r$, each $b$ prefers $\bar{\vecu}_b$ to every
alternative $a\in A$.

\noindent
\textbf{Case $k=m$.}
When perturbations may reweight all $m$ coordinates, Hölder’s inequality yields the exact no-blocking constraint
\begin{equation}\label{eq:no-block-relax}
\vecs(b)\!\cdot\!\big(\bar{\vecu}_b-\vecu(a)\big)\ \ge\ 
r\;\Big\|\big(\vecu(a)-\bar{\vecu}_b\big)_+\Big\|_{p^\star},
\qquad \forall a\in A,\ b\in B,
\end{equation}
where $(\cdot)_+$ denotes coordinate-wise positive part. Unlike the absolute value, negative coordinates are set to zero, capturing only attributes where $a$ dominates $\bar{\vecu}_b$.
The right-hand side represents the largest margin loss that $b$ could suffer if a perturbation concentrates all weight on attributes where $a$ is stronger than $\bar{\vecu}_b$.  
Thus, if \eqref{eq:no-block-relax} holds, $b$ cannot be tempted by $a$ under any $\ell_p$ perturbation of radius $r$.

\noindent
\textbf{Case $k<m$.}
When perturbations may alter at most $k$ coordinates of $\vecs(b)$, the worst-case loss is smaller.  
Let $\mathbf{v}=\vecu(a)-\bar{\vecu}_b$ and let $v_{[1]}\ge v_{[2]}\ge\cdots\ge v_{[m]}$ denote the coordinates of $\mathbf{v}_+$ sorted in nonincreasing order.  
The maximum reduction in margin under $\ell_p$ radius $r$ and support budget $k$ is then
$
r\cdot \Big\| (v_{[1]},\dots,v_{[k]}) \Big\|_{p^\star}.
$
Accordingly, the no-blocking constraint becomes
\begin{equation}\label{eq:no-block-relax-k}
\vecs(b)\!\cdot\!\big(\bar{\vecu}_b-\vecu(a)\big)\ \ge\ 
r\;\Big\|\big(\vecu(a)-\bar{\vecu}_b\big)^{(k)}_+\Big\|_{p^\star},
\qquad \forall a\in A,\ b\in B,
\end{equation}
where $(\cdot)^{(k)}_+$ denotes the vector formed by the $k$ largest positive coordinates of $\vecu(a)-\bar{\vecu}_b$ (zeros elsewhere).  
This is the $k$-aware support norm, which interpolates between the unrestricted case ($k=m$) and the degenerate case $k=1$.

\noindent
\textbf{Feasibility and optimization.}
Checking feasibility of \eqref{eq:no-block-relax}--\eqref{eq:no-block-relax-k}
reduces to convex optimization: linear programming for $p \in \{1,\infty\}$ and
second-order cone programming for $p=2$, both solvable in polynomial time.
We therefore define the relaxation-based upper bound as
\begin{equation}\label{eq:UB-def}
\mathbf{UB}\ :=\ 
\max\Bigl\{r\ge0:\ \exists\,X\in\mathrm{conv}(\mathcal{SM})\ \text{satisfying \eqref{eq:no-block-relax} or \eqref{eq:no-block-relax-k}}\Bigr\}.
\end{equation}
\begin{theorem}[Relaxation-based UB]\label{thm:UB}
$\mathbf{UB}$ is computable to additive accuracy $\varepsilon_{\mathrm{UB}}$ in polynomial time via bisection with LP/SOCP feasibility tests. 
Moreover,
$
\max_{\mu\in\mathcal{SM}}\ \rad^*(\mu)\ \le\ \mathbf{UB}.
$
If the optimizer $X^\star$ is integral (i.e., corresponds to some stable matching $\mu^\star$), then $\mathbf{UB}=\rad^*(\mu^\star)$ and the maximizing stable matching $\mu^\star$ is recovered explicitly, certifying exact optimality.
\end{theorem}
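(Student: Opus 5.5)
The proof has three parts: computability of $\mathbf{UB}$, the inequality $\max_{\mu\in\mathcal{SM}}\rad^*(\mu)\le\mathbf{UB}$, and the integrality certificate.

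\emph{Computability.} For fixed $r$, the feasible set is described by Rothblum's compact linear system for $\mathrm{conv}(\mathcal{SM})$ together with the constraints \eqref{eq:no-block-relax} or \eqref{eq:no-block-relax-k} for all $(a,b)$. The left-hand side $\vecs(b)\cdot(\bar{\vecu}_b-\vecu(a))$ is linear in $X$. The right-hand side is $r$ times a convex function of $X$, because $\bar{\vecu}_b$ is affine in $X$. That function is the positive part, or the top-$k$ positive-part norm, composed with a $p^\star$-norm.

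I linearise it with auxiliary variables $t\ge \vecu(a)-\bar{\vecu}_b$ and $t\ge 0$. For $p^\star\in\{1,\infty\}$, the $\ell_{p^\star}$ norm, and the top-$k$ version, admit standard LP epigraph formulations. For $k<m$ the top-$k$ sum has the usual LP dual: $k\theta+\sum_i z_i$ with $z_i\ge t_i-\theta$ and $z\ge 0$. For $p^\star=2$ with $k=m$ the constraint is a second-order cone constraint. For $p=2$ with $k<m$ I would enumerate the $\binom{m}{k}$ supports; this is polynomial since $m$ is constant.

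Since $r$ is fixed during each test, every test is an LP or SOCP feasibility problem of polynomial size. Feasibility is monotone in $r$: shrinking $r$ only relaxes the constraints, because the right-hand sides are nonnegative. So bisection on $r\in[0,r_{\max}]$ reaches additive accuracy $\varepsilon_{\mathrm{UB}}$ in $O(\log(r_{\max}/\varepsilon_{\mathrm{UB}}))$ tests. For $r_{\max}$ I can take the diameter of the simplex in $\ell_p$, which is at most $2$.

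\emph{Upper bound.} Fix a stable $\mu$ and any $r<\rad^*(\mu)$, and take $X=X^\mu$, so that $\bar{\vecu}_b=\vecu(\mu^{-1}(b))$. I need to show that \eqref{eq:no-block-relax-k} holds for every $a$. The intended argument is that if the constraint failed, a $k$-sparse, normalised perturbation shifting $\ell_p$ mass $r$ onto the coordinates where $a$ beats $\mu^{-1}(b)$ would make $\gamma_{\hat S}(b;a)<0$. That would make $(a,b)$ a blocking pair whenever $a\in\mathcal{H}_\mu(b)$, contradicting Lemma~\ref{lem:verify-poly}. Taking the supremum over $r$ then gives $\rad^*(\mu)\le\mathbf{UB}$, and hence the claimed bound after maximising over $\mu$.

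\emph{Main obstacle.} This construction is the delicate step, for two reasons. First, the constraint ranges over all $a\in A$, while robustness only protects $a\in\mathcal{H}_\mu(b)$. Second, the worst-case perturbation must respect the simplex and the normalisation. I would first try to show that for $a\notin\mathcal{H}_\mu(b)$ the worst case is harmless, or else read the relaxation as restricted to potentially blocking pairs. For the simplex issue, I would construct the perturbation explicitly: move weight onto the top-$k$ coordinates where $a$ is stronger and take it from the others, reading the stated exact Hölder constraint as that worst-case loss. If this exactness cannot be established, the fallback is to define $\mathbf{UB}$ using the exact worst-case loss given by the LP value $\rad^{\min}$ structure, which remains convex per support.

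\emph{Integrality.} If the optimiser $X^\star=X^{\mu^\star}$ is integral, then the constraints are exactly the robustness conditions for $\mu^\star$, under the same exactness assumption. Hence $\rad^*(\mu^\star)\ge\mathbf{UB}$. Combined with the upper bound, this gives equality and certifies optimality.
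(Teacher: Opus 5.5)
Your computability argument is sound and more explicit than the paper's (epigraph variables for $(\vecu(a)-\bar{\vecu}_b)_+$, the LP dual of the top-$k$ sum, monotonicity of feasibility in $r$). For $p=2$, $k<m$, impose all $\binom{m}{k}$ cone constraints jointly in a single program, since the top-$k$ norm is a maximum over supports.

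The genuine gap is the one you name yourself. The inequality $\max_{\mu\in\mathcal{SM}}\rad^*(\mu)\le\mathbf{UB}$ and the integrality certificate are left conditional on an ``exactness assumption'' that you never establish. Your route is otherwise the paper's, and the paper does not fill this hole either. Its proof only observes that $\mathrm{conv}(\mathcal{SM})$ contains every $X^\mu$ and that an integral optimizer ``achieves equality''. That is, it takes for granted the two facts you isolate: that $X^\mu$ satisfies \eqref{eq:no-block-relax} (resp.\ \eqref{eq:no-block-relax-k}) for every $r<\rad^*(\mu)$, and that for integral $X$ these constraints coincide with $(k,r,p)$-robustness. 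Your doubts are justified. For $\mathbf{UB}$ as defined in \eqref{eq:UB-def}, both facts are false, so no argument can close the gap without changing the relaxation.

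\emph{First failure: the constraints range over all $a\in A$.} Since $X$ is doubly stochastic, $\vecs(b)\cdot\bar{\vecu}_b\le\max_{a}\vecs(b)\cdot\vecu(a)$. So already at $r=0$ the constraints force each $b$ to put all of its mass on its top-scored candidates. Hence $X^\mu$ is infeasible unless every $b$ is matched to a score-maximizer. If two agents of $B$ share a unique top-scored candidate, the relaxation is empty for every $r\ge 0$, so $\mathbf{UB}$ is not even defined although $\mathcal{SM}\neq\emptyset$.

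\emph{Second failure: the loss term is wrong even for $a\in\mathcal{H}_\mu(b)$.} The term $r\,\|(\vecu(a)-\bar{\vecu}_b)_+\|_{p^\star}$ is not the worst-case loss, and the true loss can be larger or smaller. Admissible moves are zero-sum, so they must also remove weight from attributes where the partner is stronger, which the positive part ignores. Also, an $\ell_1$ move of size $r$ shifts only $r/2$ of weight, which the Hölder bound ignores.

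Here is a concrete instance. Take $n=m=k=2$, $\vecu(a_1)=(1,0)$, $\vecu(a_2)=(0,2)$, $\vecs(b_1)=(0.9,0.1)$, $\vecs(b_2)=(0,1)$, and let both $a_1,a_2$ rank $b_1$ first.
\begin{itemize}
\item $\mu=\{(a_1,b_1),(a_2,b_2)\}$ is the unique stable matching, so $\mathrm{conv}(\mathcal{SM})=\{X^\mu\}$, with $\mathcal{H}_\mu(b_1)=\{a_2\}$ and $\mathcal{H}_\mu(b_2)=\emptyset$.
\item Moving weight $t$ from attribute $1$ to attribute $2$ in $\vecs(b_1)$ gives margin $0.7-3t$. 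So $\rad^*(\mu)$ equals $7/15$, $7\sqrt{2}/30$, $7/30$ for $p=1,2,\infty$ respectively.
\item In the relaxation, the $(a_2,b_1)$ constraint reads $0.7\ge r\,\|(0,2)\|_{p^\star}=2r$, and the $(a_1,b_2)$ constraint only gives $r\le 2$. So $\mathbf{UB}=0.35$ for every $p$.
\end{itemize}
For $p=1$ this contradicts $\max_\mu\rad^*(\mu)\le\mathbf{UB}$. For $p\in\{2,\infty\}$ the optimizer is integral but $\mathbf{UB}\neq\rad^*(\mu^\star)$.

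Your fallback is the right repair, but it proves a different theorem. Two changes are needed:
\begin{itemize}
\item Replace each $(a,b)$ constraint by the exact requirement $\hat{\vecs}\cdot(\bar{\vecu}_b-\vecu(a))\ge 0$ for all admissible $\hat{\vecs}$ within radius $r$. The minimum over $\hat{\vecs}$ of the left side is concave in $X$ and LP/SOCP-representable per support by duality, with all supports imposed jointly.
\item Switch the constraint off when $a\notin\mathcal{H}_\mu(b)$, e.g.\ by adding $M\bigl(1-\sum_{b':\,b\succ_a b'}X_{ab'}\bigr)$ to the left side. This linear term vanishes at $X^\mu$ exactly when $a\in\mathcal{H}_\mu(b)$.
\end{itemize}
With these changes, $X^\mu$ is feasible for all $r<\rad^*(\mu)$. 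An integral feasible point at $r=\mathbf{UB}$ is then robust at that radius, up to the convention for zero margins. Your upper-bound and integrality arguments then go through as you wrote them.
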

\begin{proof}
\vspace{-8pt}

Feasibility for a fixed $r$ is a convex program, an LP for $p\!\in\!\{1,\infty\}$ 
and an SOCP for $p\!=\!2$, and thus solvable in polynomial time.
Bisection on $r$ yields an additive $\varepsilon_{\mathrm{UB}}$-approximation 
in polynomial time.
Since $\mathrm{conv}(\mathcal{SM})$ contains all integral stable matchings,
$\max_{\mu\in\mathcal{SM}}\rad^*(\mu)\le\mathbf{UB}$.
If the optimal $X^\star$ is integral, it corresponds to a stable matching 
$\mu^\star$ achieving equality.
\end{proof}

\noindent
\textbf{Local frontier bounds.} Beyond the global value $\mathbf{UB}$, 
the same relaxation can be restricted to sublattices of $\mathcal{SM}$ 
defined by down-sets $D$ in the rotation poset.  
This yields local bounds $UB(D)$ that refine $\mathbf{UB}$ during the anytime search.  
Formally, $UB(D)$ is obtained by adding the rotation constraints of $D$ 
to the global relaxation, so the feasible region only shrinks.
These refinements are computable in polynomial time without recomputing from scratch, 
and they integrate directly into the anytime search 
(Section~\ref{subsec:anytime}), where they ensure that the global LB/UB 
relation continues to hold.

\subsection{Anytime search on the rotation poset}\label{subsec:anytime}

We now describe an anytime search procedure that combines the 
lower bound from $\mu_B$ and the relaxation-based upper bounds 
to progressively narrow the gap between them.

\paragraph{Algorithm.}  
We maintain a priority queue of nodes $D$ in the rotation poset, 
each corresponding to a downset of rotations and its associated stable matching 
$\mu_D=\operatorname{elim}(\mu_A,D)$. 
The queue is  in descending order of $UB(D)$ values, so that at each step we 
explore the matching whose sublattice still allows the largest possible 
robustness radius.

\begin{enumerate}
\item \textbf{Initialization.}  
Insert the root node $D=\emptyset$ (corresponding to $\mu_A$) 
into the queue, and set $\mathbf{LB}\gets \rad^*(\mu_B)$.

\item \textbf{Node extraction.}  
Remove from the queue the node $D$ with the largest $UB(D)$.

\item \textbf{Pruning.}  
If $UB(D)\le \mathbf{LB}$, discard $D$, since no matching reachable 
from it can improve the current best robustness.

\item \textbf{Exact evaluation.}  
Compute $\rad^*(\mu_D)$ exactly (via the LP/SOCP formulations from 
Section~\ref{sec:radius-given}) and update $\mathbf{LB}$ if improved.

\item \textbf{Expansion.}  
For each rotation $\rho$ exposed at $D$, form $D' = D \cup \{\rho\}$, 
compute $UB(D')$, and insert it into the priority queue. 
\end{enumerate}
The search continues until either $\mathbf{LB}=\mathbf{UB}_{\mathrm{frontier}}$ (certifying exact optimality) or until a predefined expansion budget is reached, after which the best matching found so far is returned.
\begin{theorem}[Anytime correctness]\label{thm:anytime}
At all times,
\[
\mathbf{LB}\ \le\ \max_\mu \rad^*(\mu)\ \le\ \mathbf{UB}_{\mathrm{frontier}}\ \le\ \mathbf{UB}.
\]
$\mathbf{LB}$ increases monotonically while $\mathbf{UB}_{\mathrm{frontier}}$ decreases monotonically.  
If some frontier node attains an integral relaxation, exact optimality is certified:
$
\mathbf{LB}=\mathbf{UB}_{\mathrm{frontier}}=\rad^*(\mu^\star).
$
\end{theorem}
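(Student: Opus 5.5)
The plan is an invariant argument over the iterations of the search. I take $\mathbf{UB}_{\mathrm{frontier}}$ to be the maximum of $UB(D)$ over the nodes currently in the queue, together with the nodes whose subtrees have not yet been fully accounted for. I then prove the chain of inequalities by induction on the number of iterations.

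\emph{Lower bound.} At initialization $\mathbf{LB}=\rad^*(\mu_B)$, and $\mu_B\in\mathcal{SM}$, so $\mathbf{LB}\le\max_\mu\rad^*(\mu)$ (this is Proposition~\ref{prop:LB-muB}). Afterwards $\mathbf{LB}$ changes only in the exact-evaluation step. That step replaces $\mathbf{LB}$ by $\max(\mathbf{LB},\rad^*(\mu_D))$, where $\mu_D=\operatorname{elim}(\mu_A,D)$ is a stable matching and $\rad^*(\mu_D)$ is computed exactly by Theorem~\ref{thm:radius-poly}. So $\mathbf{LB}$ is always the radius of some stable matching, which gives the first inequality, and it is nondecreasing by construction.

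\emph{Upper bounds.} The key invariant is that every stable matching $\mu$ is either already evaluated or lies in the sublattice $\mathcal{SM}(D)$ of some queued node $D$. This holds initially, since every stable matching is $\operatorname{elim}(\mu_A,D)$ for a down-set $D\supseteq\emptyset$. Expansion preserves it: every down-set $D''\supsetneq D$ contains some rotation $\rho$ exposed at $\mu_D$ (a minimal element of $D''\setminus D$), so $D''\supseteq D\cup\{\rho\}$. Pruning preserves it in the weak form needed: nodes with $UB(D)\le\mathbf{LB}$ contain no matching exceeding $\mathbf{LB}$.

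Next, $UB(D)$ is obtained by adding constraints to the global relaxation, so $\mathcal{SM}(D)\subseteq\mathrm{conv}(\mathcal{SM}(D))\subseteq\mathrm{conv}(\mathcal{SM})$. Arguing as in Theorem~\ref{thm:UB}, this gives $\rad^*(\mu)\le UB(D)\le\mathbf{UB}$ for every $\mu\in\mathcal{SM}(D)$. Combining this with the invariant yields $\max_\mu\rad^*(\mu)\le\max(\mathbf{LB},\mathbf{UB}_{\mathrm{frontier}})$. I then have to handle the degenerate case in which the max is $\mathbf{LB}$; I would do so by interpreting $\mathbf{UB}_{\mathrm{frontier}}$ as at least $\mathbf{LB}$, which I expect is the intended convention.

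Monotone decrease of $\mathbf{UB}_{\mathrm{frontier}}$ follows from two observations. Children have $UB(D')\le UB(D)$, since the feasible region shrinks. Removal or pruning of a node only deletes terms from the maximum.

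\emph{Certification.} Suppose a frontier node attaining the frontier maximum has an integral relaxation optimizer $X^\star=X^{\mu^\star}$. Then the constraints \eqref{eq:no-block-relax}--\eqref{eq:no-block-relax-k} at $X^{\mu^\star}$ certify that $\mu^\star$ is robust at radius $UB(D)$, so $\rad^*(\mu^\star)\ge\mathbf{UB}_{\mathrm{frontier}}$. After $\mu^\star$ is evaluated, $\mathbf{LB}\ge\rad^*(\mu^\star)$, which squeezes all three quantities together.

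\emph{Main obstacle.} The hardest step is justifying $\rad^*(\mu)\le UB(D)$ for integral points. For $k=m$ this requires the relaxed constraint to be not just sufficient but \emph{exactly} equivalent to $(k,r,p)$-robustness at integral $X$. Doing so means showing that the worst-case post-normalized perturbation attains the positive-part dual norm bound, staying within the simplex. I would first attempt this by constructing explicit perturbations that shift mass onto the coordinates where $\vecu(a)$ exceeds $\bar{\vecu}_b$. If exactness fails, I would instead cite Theorem~\ref{thm:UB}'s inclusion as the stated ingredient.
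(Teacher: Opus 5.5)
Your proposal is the paper's own argument written out with more care. The paper likewise takes $\rad^*(\mu_D)\le UB(D)\le\mathbf{UB}$ directly from the relaxation claim, lets $\mathbf{LB}$ be the best exact radius found and $\mathbf{UB}_{\mathrm{frontier}}$ the maximum of $UB(D)$ over active nodes, and reads certification off an integral optimizer. It leaves implicit three things you make explicit: the covering invariant under expansion and pruning; the convention $\mathbf{UB}_{\mathrm{frontier}}\ge\mathbf{LB}$, without which $\max_\mu\rad^*(\mu)\le\mathbf{UB}_{\mathrm{frontier}}$ can fail once the optimum has been evaluated; and the restriction of certification to a frontier node attaining the maximum.

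Be aware, though, that your direct route to the main obstacle cannot succeed. Constraint \eqref{eq:no-block-relax} ranges over all $a\in A$, so at $r=0$ it already excludes any stable $\mu$ in which some $b$ is not matched to its top-scoring candidate. Its positive-part term also ignores $\sum_i(\hat s_i(b)-s_i(b))=0$. So at an integral $X$ the constraint is neither necessary nor sufficient for robustness. With $m=2$: for $p=1$, $\boldsymbol{\Delta}(b;a\mid\mu)=(0.1,-1)$ and $\vecs(b)=(0.95,0.05)$, the pair cannot block for $r\le 9/110$ but the constraint fails for $r>9/200$. For $p=\infty$, $\boldsymbol{\Delta}(b;a\mid\mu)=(1,-0.1)$ and $\vecs(b)=(\tfrac12,\tfrac12)$, the constraint holds up to $r=9/2$ even though $(a,b)$ blocks for every $r>9/22$. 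Hence both your upper-bound step and your certification step rest entirely on Theorem~\ref{thm:UB}, exactly as the paper's proof does.
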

\begin{proof}
\vspace{-5pt}

Each node $D$ corresponds to a downset of rotations and a stable matching $\mu_D$.  
Since $UB(D)$ is derived from a relaxation of the exact feasibility region,
we have $\rad^*(\mu_D)\le UB(D)\le\mathbf{UB}$ for all~$D$.  
$\mathbf{LB}$ records the best exact radius found so far and can only increase, while $\mathbf{UB}_{\mathrm{frontier}}=\max_D UB(D)$ over the active nodes can only decrease as nodes are expanded or pruned.  
Thus the stated inequalities and monotonicity follow.  
If an integral relaxation is reached, it corresponds to a stable matching $\mu^\star$ with
$\mathbf{LB}=\mathbf{UB}_{\mathrm{frontier}}=\rad^*(\mu^\star)$, certifying exact optimality.
\end{proof}

Although this anytime search is not polynomial-time in the worst case, it always maintains certified bounds $[\mathbf{LB},\mathbf{UB}_{\mathrm{frontier}}]$ on the optimum. 
Both bounds evolve monotonically ($\mathbf{LB}$ increases, 
$\mathbf{UB}_{\mathrm{frontier}}$ decreases), so the procedure converges toward the true optimum and can be stopped at any point with provable bounds on the achieved robustness.

\section{Robustness--Cost Tradeoffs}\label{sec:por}

In the previous section, we studied robustness alone, aiming to find a 
stable matching with the largest radius.  
A natural next step is to combine robustness with a \emph{cost function} which can model 
welfare, fairness, or other priorities in applications. Fixing a robustness at a given value $\tau\ge0$, we ask for the minimum cost achievable under this requirement. Formally, for separable costs $C(\mu)=\sum_{a\in A} c_{a,\mu(a)}$ and 
robustness requirement $\tau\ge0$, the target is
\[
C^*(\tau)\;=\;\min\{\,C(\mu):\ \rad^*(\mu)\ge\tau\,\}.
\]
Direct optimization under $\rad^*(\mu)$ is computationally difficult,
since the exact radius does not align with the rotation-poset structure. Instead, we develop polynomial-time proxy quantities that yield certified (i.e., provable) upper and lower bounds on $C^*(\tau)$.

\subsection{Upper bound via the base radius}

The base radius $\rad^{\mathrm{base}}(\mu)$ 
(Section~\ref{subsec:rot-base}) is computable in closed form and always satisfies 
$\rad^{\mathrm{base}}(\mu)\le \rad^*(\mu)$.  
Requiring $\rad^{\mathrm{base}}(\mu)\ge\tau$ is equivalent to insisting that 
each $b\in B$ is matched only to candidates $a$ for which 
$\gamma(b;a)/U_{p^\star}(b)\ge\tau$, where $U_{p^\star}(b)$ is the 
dual-norm attribute gap defined in Section~\ref{subsec:rot-base}.  
In terms of the standard rotation poset, this condition simply prunes 
all stable matchings that violate the base-radius threshold, leaving a 
distributive sublattice over which optimization can be carried out.
\begin{theorem}[Polynomial-time base frontier]\label{thm:base-cost}
For any $\tau\ge0$, the minimum-cost stable matching with 
$\rad^{\mathrm{base}}(\mu)\ge\tau$ is computable in polynomial time.
Hence
\[
C^*(\tau)\ \le\ C^{\mathrm{UB}}(\tau)\ :=\ \min\{\,C(\mu):\ 
\rad^{\mathrm{base}}(\mu)\ge\tau\,\}.
\]
\end{theorem}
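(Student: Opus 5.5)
The plan is to show that the constraint $\rad^{\mathrm{base}}(\mu)\ge\tau$ is a \emph{pairwise} restriction on which pairs $\mu$ may use. Then I will optimize the separable cost over the face of Rothblum's stable-marriage polytope~\cite{Rothblum94} obtained by forbidding the disallowed pairs.

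\textbf{Step 1: reduce to allowed pairs.} For a pair $(a^\circ,b)\in A\times B$ define
\[
f(a^\circ,b):=(1-\varepsilon_{\mathrm{base}})\min_{a:\,a^\circ\succ_b a}\frac{\vecs(b)\cdot(\vecu(a^\circ)-\vecu(a))}{\max_{a'\neq a^\circ}\|\vecu(a^\circ)-\vecu(a')\|_{p^\star}},
\]
with $f=+\infty$ if $a^\circ$ is last in $b$'s list. By \eqref{eq:base-radius}, $\gamma(b;a)$ and $U_{p^\star}(b)$ depend on $\mu$ only through the partner $\mu^{-1}(b)$. Hence $\rad^{\mathrm{base}}(\mu)=\min_{b}f(\mu^{-1}(b),b)$, and so $\rad^{\mathrm{base}}(\mu)\ge\tau$ holds iff every pair of $\mu$ lies in $E_\tau:=\{(a,b):f(a,b)\ge\tau\}$. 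Computing all $f(a,b)$ takes $O(n^3m)$ time, which is polynomial. The cost is that of evaluating scores, sorting each $b$'s list, and taking prefix minima.

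\textbf{Step 2: optimize over a face.} Take Rothblum's compact linear description of $\mathrm{conv}(\mathcal{SM})$ and add the constraints $X_{ab}=0$ for all $(a,b)\notin E_\tau$. Since $X_{ab}\ge0$ is valid for the polytope, each such constraint selects a face. An intersection of faces is a face, and a face of an integral polytope is integral. Its vertices are therefore exactly the incidence vectors of stable matchings contained in $E_\tau$. Minimizing the linear objective $\sum_{a,b}c_{ab}X_{ab}$ over this face is an LP of polynomial size. A vertex optimal solution, which interior-point methods followed by crossover or the ellipsoid method can deliver in polynomial time, is integral and yields the required $\mu$. If the LP is infeasible, no such matching exists and $C^{\mathrm{UB}}(\tau)=+\infty$. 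As an alternative I would mention a rotation-poset route, matching the paper's remark on sublattices. The condition that $\mu$ avoids a pair $(a,b)$ is expressed by closure constraints on the down-set $D$ (rotations creating and destroying the pair). Min-cost stable matching is a minimum-weight closure problem, which is solved by min-cut, and these extra constraints preserve that structure. The face argument is cleaner, however, so I would lead with it.

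\textbf{Step 3: the inequality.} By Corollary~\ref{cor:base-inner}, $\rad^{\mathrm{base}}(\mu)\le\rad^*(\mu)$. Hence every $\mu$ feasible for $C^{\mathrm{UB}}(\tau)$ is feasible for $C^*(\tau)$, which gives $C^*(\tau)\le C^{\mathrm{UB}}(\tau)$.

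The main obstacle is Step 1's claim that the base radius factors through individual pairs. I must check carefully that both $U_{p^\star}(b)$ and the range of $a$ in the inner minimum depend only on $\mu^{-1}(b)$, and that the ordering $\succ_b$ is fixed by $S$. Once that holds, the integrality of the face in Step 2 is standard.
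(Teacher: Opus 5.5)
Your proof is correct, but its main line differs from the paper's.

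\textbf{The paper's route.} The paper stays inside the rotation poset. It asserts that $\rad^{\mathrm{base}}(\mu)\ge\tau$ becomes a condition on each eliminated rotation, stated with $\bar{\vecu}_b$ and $\|(\vecu(a)-\bar{\vecu}_b)_+\|_{p^\star}$. It then asserts that the admissible rotations form a down-set. Finally, it folds the slack of that condition together with the costs into rotation weights $\Delta_\tau(\rho)$ and solves one maximum-weight closure by min-cut.

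\textbf{Your route.} You first prove the pair-level factorization $\rad^{\mathrm{base}}(\mu)=\min_b f(\mu^{-1}(b),b)$, which turns the threshold into a forbidden-pairs constraint. You then minimize cost over the face of Rothblum's polytope cut out by $X_{ab}=0$ for $(a,b)\notin E_\tau$, using that faces of an integral polytope are integral.

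\textbf{Why your Step~1 matters.} It is the ingredient the paper's sketch lacks. It works with the actual definition \eqref{eq:base-radius}, using $U_{p^\star}(b)$ rather than a positive-part norm of $\bar{\vecu}_b$. It also shows that the feasible matchings form a sublattice of $\mathcal{SM}$ described by implications between rotations. It need not be the set of down-sets inside one fixed down-set of admissible rotations. Consequently the threshold must be imposed as hard closure arcs $\rho_{\mathrm{in}}\Rightarrow\rho_{\mathrm{out}}$, with forced inclusion or exclusion for forbidden pairs lying in $\mu_A$ or $\mu_B$. It cannot simply be absorbed into the objective as in $\Delta_\tau(\rho)$. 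Your sketched alternative is therefore the rigorous form of the paper's route.

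\textbf{What each approach buys.} The combinatorial route, once repaired, gives a strongly polynomial algorithm: a single min-cut over the $O(n^2)$ rotations, which fits the rotation-poset search of Section~\ref{subsec:anytime}. Your LP route gives a very short correctness proof resting only on Rothblum's integrality. Its price is needing an LP method that returns an optimal vertex, for example the ellipsoid or interior-point method followed by crossover.

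Your factorization also shows directly that $E_\tau$, and hence $C^{\mathrm{UB}}(\tau)$, changes only at the $O(n^2)$ values $f(a,b)$. This is sharper than the paper's $O(n^3)$ breakpoint count for the frontier.
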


\begin{proof}

Fix $\tau \ge 0$.
For any stable matching~$\mu$, the constraint
$\rad^{\mathrm{base}}(\mu) \ge \tau$
requires that every eliminated rotation~$\rho$ satisfy
\[
\min_{(a,b)\in\rho}
\big[
\vecs(b)\!\cdot\!(\bar{\vecu}_b-\vecu(a))
-\tau\,\|(\vecu(a)-\bar{\vecu}_b)_+\|_{p^\star}
\big] \ge 0,
\]
where $(\cdot)_+$ denotes the coordinate-wise positive part.
Hence, the set of rotations that can be eliminated while maintaining
stability at level~$\tau$ forms a \emph{downward-closed subset}
of the rotation poset.

We assign each rotation a modified weight
\[
\Delta_\tau(\rho)
=\min_{(a,b)\in\rho}
\big[
\vecs(b)\!\cdot\!(\bar{\vecu}_b-\vecu(a))
-\tau\,\|(\vecu(a)-\bar{\vecu}_b)_+\|_{p^\star}
\big]
-\!\!\sum_{(a,b)\in\rho} c_{a,b}.
\]
Minimizing $C(\mu)$ subject to $\rad^{\mathrm{base}}(\mu)\!\ge\!\tau$
is therefore equivalent to finding a
\emph{maximum-weight closure}
in the rotation poset.
By the classical reduction of maximum-weight closure
to a single $s$--$t$ min-cut~\cite{GusfieldIrving89},
the optimal stable matching for each fixed~$\tau$
can be computed in polynomial time.

Distinct threshold values of~$\tau$
arise only when some base constraint becomes tight, i.e.,
$
\vecs(b)\!\cdot\!(\bar{\vecu}_b-\vecu(a))
=\tau\,\|(\vecu(a)-\bar{\vecu}_b)_+\|_{p^\star}.
$
Since there are $O(n^2)$ score margins $(a,b)$
and $O(n)$ rotations in the poset,
the number of breakpoints is $O(n^3)$.
Sweeping over these values yields the exact
$(\rad^{\mathrm{base}},C)$ frontier.
\end{proof}
\subsection{Lower bound via relaxation}
To obtain a lower bound, we again work with Rothblum’s 
stable-marriage polytope $\mathrm{conv}(\mathcal{SM})$ 
introduced in Section~\ref{subsec:UB-global}, 
but now augment it with \emph{vulnerability cuts}:  
for each cross pair $(a,b')$ and $(a',b)$ that could block within 
radius $\tau$, we add the constraint $y_{ab'}+y_{a'b}\le1$. Here $y_{ab}\in[0,1]$ denotes the standard assignment variable,
and $\mathrm{conv}(\mathcal{SM})$ admits a polynomial-size linear description~\cite{Rothblum94}. Formally, we solve
\[
\begin{aligned}
\min\ & \sum_{a,b} c_{ab}\,y_{ab} \\
\text{s.t. } \quad & y\in\mathrm{conv}(\mathcal{SM}), \quad y \ \text{satisfies all vulnerability cuts for }\tau.
\end{aligned}
\]
Every $\tau$-robust stable matching $\mu$ induces a feasible point in 
this relaxation, so the LP optimum $C^{\mathrm{LB}}(\tau)$ is a valid 
lower bound.

\begin{proposition}[LP lower bound]\label{prop:LB-cost}
Every stable $\mu$ with $\rad^*(\mu)\ge\tau$ induces a feasible LP solution.  
Therefore
\[
C^*(\tau)\ \ge\ C^{\mathrm{LB}}(\tau).
\]
\end{proposition}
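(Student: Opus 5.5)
The plan is the standard relaxation argument: take a stable $\mu$ with $\rad^*(\mu)\ge\tau$, set $y:=X^\mu$ (so $y_{ab}=1$ iff $\mu(a)=b$), check that $y$ satisfies both families of constraints, and then compare objective values.

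First, $y$ lies in $\mathrm{conv}(\mathcal{SM})$. Since $\mu$ is stable, $X^\mu$ is one of the vertices whose convex hull defines the polytope. It therefore satisfies Rothblum's compact linear description~\cite{Rothblum94}, and no further work is needed for this part.

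Second, $y$ satisfies every vulnerability cut at level $\tau$. Take a cut $y_{ab'}+y_{a'b}\le 1$. Here the cross pairs $(a,b')$ and $(a',b)$ are exactly those flagged as able to block within radius $\tau$ if both assignments are present simultaneously. I will read this as: if $\mu(a)=b'$ and $\mu(a')=b$, then some admissible perturbation of $\vecs(b)$ of $\ell_p$-radius at most $\tau$ (support at most $k$) makes a cross pair, say $(a,b)$, blocking. Concretely, $b\succ_a b'$ and $\rad^{\min}(b;a)\le\tau$ when $a'$ is $b$'s partner.

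Suppose for contradiction that $y_{ab'}=y_{a'b}=1$. Then $a\in\mathcal{H}_\mu(b)$, with partner $\mu^{-1}(b)=a'$. By Definition~\ref{def:pairwise-thresholds}, $\rad^*(\mu)\le\rad^{\min}(b;a)\le\tau$. To get a contradiction I need strictness in one of the two inequalities. There are two options:
\begin{itemize}
\item define the cuts using the strict condition $\rad^{\min}(b;a)<\tau$, which then contradicts $\rad^*(\mu)\ge\tau$ directly;
\item use the closed-ball convention of Lemma~\ref{lem:verify-poly}. Robustness at radius $\tau$ forbids any perturbation with $\|\hat{\vecs}(b)-\vecs(b)\|_p\le\tau$ giving $\gamma_{\hat S}(b;a)<0$, and the cut is triggered precisely when such a witness exists.
\end{itemize}
Either way the cut cannot be violated. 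Since $y$ is $0/1$ and $\mu$ is a bijection, every other cut holds trivially because its left-hand side is at most $1$.

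Third, compare objectives. We have $\sum_{a,b}c_{ab}y_{ab}=\sum_a c_{a,\mu(a)}=C(\mu)$, so $C^{\mathrm{LB}}(\tau)\le C(\mu)$. Taking the minimum over all $\mu$ with $\rad^*(\mu)\ge\tau$ gives $C^{\mathrm{LB}}(\tau)\le C^*(\tau)$. If no such $\mu$ exists, $C^*(\tau)=+\infty$ and the inequality is trivial.

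The only delicate step is the second one. The difficulty is pinning down exactly which cross pairs generate cuts, and the boundary case $\rad^{\min}(b;a)=\tau$ where the sup in Definition~\ref{def:pairwise-thresholds} is attained. My plan is to fix the cut-generation rule as "a witness perturbation of radius at most $\tau$ exists", which is an LP/SOCP feasibility test from Section~\ref{sec:verify-given}. I will then invoke Lemma~\ref{lem:verify-poly} to conclude that a $(k,\tau,p)$-robust $\mu$ cannot contain both assignments of any cut.
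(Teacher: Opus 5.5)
Your proposal is correct and is essentially the paper's argument. The paper's proof consists only of the remark that every $\tau$-robust stable matching yields a feasible point of the relaxation with objective value $C(\mu)$; you carry out the three checks (membership in $\mathrm{conv}(\mathcal{SM})$, the vulnerability cuts, and the objective) explicitly. Your care at the boundary is warranted: a cut generated whenever $\rad^{\min}(b;a)\le\tau$ would be violated by $X^\mu$ for a matching with $\rad^*(\mu)=\tau$ attained at that pair, so the cut rule must be strict as in your first option. Your second option also works, because each admissible set for a fixed support $Q$ is convex and contains $\vecs(b)$, so the supremum defining $\rad^*(\mu)$ is attained.
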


The vulnerable set changes only at $O(n^4)$ thresholds, since each arises from a cross pair $(a,b'),(a',b)$.  
Thus, at most $O(n^4)$ distinct LPs need to be solved.  
Each such LP carries $O(m^k)$ additional constraints from the support budget; 
hence, $C^{\mathrm{LB}}(\tau)$ can be computed by solving a total of $O(n^4 m^k)$ LP instances.

\subsection{Certified tradeoff bounds}

For each robustness target $\tau$, our two constructions yield
\[
C^{\mathrm{LB}}(\tau)\ \le\ C^*(\tau)\ \le\ C^{\mathrm{UB}}(\tau).
\]
Here $C^{\mathrm{UB}}(\tau)$ comes from requiring the base radius 
$\rad^{\mathrm{base}}(\mu)\ge\tau$, which produces an explicit stable 
matching and a constructive upper bound.  
The lower bound $C^{\mathrm{LB}}(\tau)$ arises from the LP relaxation 
with vulnerability cuts, giving a valid numerical bound that coincides with a true stable matching whenever the LP optimum 
is integral. Varying $\tau$ reveals the robustness-cost tradeoff: stricter robustness raises the minimum cost, and the bounds capture this relationship.

\section{Geometry of the Robustness Region}\label{sec:geometry}

Up to this point we examined robustness under a fixed salience profile~$S$.  
When both a matching~$\mu$ and a profile~$S$ are given, 
we asked whether $\mu$ remains stable under a given perturbation radius 
and computed its exact robustness radius.  
When only~$S$ is given, we searched for the most robust matching 
and studied the tradeoff between robustness and cost. We now take the dual view: fix a matching~$\mu$ and characterize, geometrically, its \emph{robustness region} -- the set of all salience profiles under which~$\mu$ remains stable. A quantitative analysis of its volume appears in Appendix~\ref{app:volume}.

While the robustness radius captures the most fragile perturbation direction 
(the minimal deviation that breaks stability),  
the robustness region reveals the full multidimensional structure of stability:
its \emph{geometry} indicates which perturbation directions are tolerated,
and its \emph{volume} (analyzed in Appendix~\ref{app:volume}) quantifies the total range of salience profiles for which $\mu$ remains stable.
This allows us to compare matchings not only by their most fragile direction,
but also by the overall extent of their stability region in salience space.  
This contrasts with Rothblum’s \emph{stable-marriage polytope} -  
a polytope in matching space whose vertices are stable matchings.  
Our robustness region instead lives in the salience space 
and forms the \emph{profile polytope} of~$\mu$.

From Section~\ref{par:Pmu-notation}, the robustness region $\mathcal P_\mu$ of a matching $\mu$ 
is defined by the set of salience profiles under which no blocking pair arises.  
Equivalently, stability reduces to the linear inequalities
\begin{equation}\label{eq:ineq}
\vecs(b)\cdot\bigl(\vecu(\mu^{-1}(b))-\vecu(a)\bigr)\ \ge\ 0
\qquad \forall b\in B,\ a\in\mathcal H_\mu(b),
\end{equation}
where $\mathcal H_\mu(b)$ is the set of candidate blockers as introduced in Section~\ref{sec:verify-given}.

\begin{lemma}[Factorization]\label{lem:factorization}
The robustness region factorizes across $B$:
\[
\mathcal P_\mu\ =\ \prod_{b\in B}\ \mathcal P_\mu(b),
\]
where each $\mathcal P_\mu(b)$ is the polytope
\[
\mathcal P_\mu(b)\ :=\ \{\vecs\in\simplex{m}:\ 
\vecs\cdot(\vecu(\mu^{-1}(b))-\vecu(a))\ge 0,\ 
\forall a\in\mathcal H_\mu(b)\}.
\]
\end{lemma}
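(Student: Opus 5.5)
The plan is to prove the factorization by showing that membership of a profile $S=(\vecs(b))_{b\in B}$ in $\mathcal P_\mu$ is a conjunction of conditions, each involving a single row $\vecs(b)$. The crucial structural observation is that the sets $\mathcal H_\mu(b)$ depend only on $\mu$ and on the static preferences of side $A$. They do not depend on the salience profile at all. So when we vary $S$, the index set of the constraints in \eqref{eq:ineq} stays fixed, and only the coefficient rows vary.

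First, I would rewrite stability at profile $S$ in terms of blocking pairs. A pair $(a,b)$ blocks $\mu$ iff $b\succ_a\mu(a)$, i.e.\ $a\in\mathcal H_\mu(b)$, and $a\succ_b\mu^{-1}(b)$. The first condition is $S$-independent. The second, by the definition of the ranking through scores, amounts to $\gamma_S(b;a\mid\mu)<0$, or to a tie broken against $\mu^{-1}(b)$. Hence $\mu$ is stable at $S$ iff for every $b$ and every $a\in\mathcal H_\mu(b)$ we have $\mu^{-1}(b)\succ_b a$. Up to the tie convention, this is exactly the system \eqref{eq:ineq}, which the paper already adopts as the defining description of $\mathcal P_\mu$; I will take \eqref{eq:ineq} as the definition.

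Second, I would observe that the inequality indexed by $(b,a)$ involves only the row $\vecs(b)$. The vector $\vecu(\mu^{-1}(b))-\vecu(a)$ is a constant, and the simplex constraint $S\in(\simplex{m})^n$ is itself a product of per-row constraints. So $S\in\mathcal P_\mu$ iff, for each $b$, the row $\vecs(b)$ lies in $\simplex{m}$ and satisfies all inequalities indexed by $a\in\mathcal H_\mu(b)$. That is precisely the condition $\vecs(b)\in\mathcal P_\mu(b)$, which gives both inclusions of $\mathcal P_\mu=\prod_b\mathcal P_\mu(b)$ at once.

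Third, I would note that each $\mathcal P_\mu(b)$ is a polytope: it is the intersection of the compact simplex with finitely many closed halfspaces through the origin. It is nonempty when $\mu$ is stable at the original profile, since $\vecs(b)\in\mathcal P_\mu(b)$ in that case.

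The only delicate point is ties, i.e.\ profiles with $\gamma_S(b;a)=0$. There the tie-breaking order decides whether $(a,b)$ blocks, so the true stability set may differ from \eqref{eq:ineq} on a measure-zero boundary. I would handle this as Remark~\ref{rem:strict} does: identify $\mathcal P_\mu$ with the closed system \eqref{eq:ineq}, as the section states. If the strict version is wanted, the tie-breaking rule is still a per-$b$ condition, so factorization holds verbatim with half-open facets.
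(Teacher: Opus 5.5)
Your proposal is correct and is the paper's argument. The paper's one-line proof just observes that each inequality in \eqref{eq:ineq} involves only the row $\vecs(b)$, and you make explicit the two facts this relies on: that $\mathcal H_\mu(b)$ does not depend on $S$, and that the simplex constraint is itself per-row. One caveat on your tie remark: if $\vecu(a)=\vecu(\mu^{-1}(b))$ for some $a\in\mathcal H_\mu(b)$, the tie-broken stability set can differ from \eqref{eq:ineq} on a full-dimensional set rather than a measure-zero boundary, though factorization is unaffected because that condition still concerns $b$ alone.
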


\begin{proof}
This lemma follows directly from the fact that each stability constraint
involves only the salience vector~$\vecs(b)$ of a single agent~$b$.
\end{proof}

\begin{proposition}[Polyhedral structure]\label{prop:polytope}
Each factor $\mathcal P_\mu(b)$ is a convex polytope in $\simplex{m}$ defined by at most $|\mathcal H_\mu(b)|\le n-1$ linear inequalities.  
Therefore $\mathcal P_\mu$ is a polytope in $(\simplex{m})^n$ defined by $O(n^2)$ inequalities in total.
\end{proposition}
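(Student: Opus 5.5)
The plan is to read the structure directly off Lemma~\ref{lem:factorization} and the stability inequalities \eqref{eq:ineq}, in three steps: describe each factor, bound its constraint count, then assemble the product.

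\emph{Step 1 (each factor is a polytope).} Fix $b\in B$. By definition, $\mathcal P_\mu(b)$ is the intersection of the simplex $\simplex{m}$ with the closed half-spaces
\[
H_a:=\{\vecs\in\mathbb{R}^m:\ \vecs\cdot(\vecu(\mu^{-1}(b))-\vecu(a))\ge 0\},\qquad a\in\mathcal H_\mu(b).
\]
Each $H_a$ is defined by a single linear inequality whose coefficient vector is the fixed attribute-gap vector $\boldsymbol{\Delta}(b;a\mid\mu)$. The simplex is itself a polytope, cut out by $m$ nonnegativity constraints and one equality. A finite intersection of half-spaces with a polytope is convex and polyhedral, and it is bounded because it sits inside $\simplex{m}$. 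Hence $\mathcal P_\mu(b)$ is a convex polytope, possibly empty or lower-dimensional.

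\emph{Step 2 (counting the stability inequalities).} Apart from the $m+1$ simplex constraints, the stability inequalities are indexed by $a\in\mathcal H_\mu(b)$. If $a\in\mathcal H_\mu(b)$, then $b\succ_a\mu(a)$, so $\mu(a)\neq b$ and therefore $a\neq\mu^{-1}(b)$. Thus $\mathcal H_\mu(b)\subseteq A\setminus\{\mu^{-1}(b)\}$, which gives $|\mathcal H_\mu(b)|\le n-1$. This is the only point that uses anything beyond the definitions.

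\emph{Step 3 (assembling the product).} By Lemma~\ref{lem:factorization}, $\mathcal P_\mu=\prod_b\mathcal P_\mu(b)$, viewed inside $(\simplex{m})^n\subseteq\mathbb{R}^{n\times m}$. Each factor's inequalities involve only the coordinates of row $b$, so they lift to linear inequalities on the whole profile space. The product of finitely many polytopes is therefore the intersection of these lifted constraints. It is bounded and convex, hence a polytope. The total number of stability inequalities is
\[
\sum_b |\mathcal H_\mu(b)|\le n(n-1)=O(n^2).
\]
The $n(m+1)$ simplex constraints add only $O(n)$ more, since $m$ is a fixed constant, so the total remains $O(n^2)$.

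There is no real obstacle here. The only care needed is to state clearly that the $O(n^2)$ count refers to the stability inequalities, with the simplex constraints contributing $O(nm)=O(n)$ additional ones.
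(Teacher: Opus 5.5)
Your proposal is correct and follows the same route as the paper: you intersect the simplex with the $|\mathcal H_\mu(b)|$ linear half-spaces from \eqref{eq:ineq}, get boundedness from the simplex, and sum over $b$ to obtain the $O(n^2)$ count. You also supply two details the paper leaves implicit, both correct: the observation $\mu^{-1}(b)\notin\mathcal H_\mu(b)$, which justifies the $n-1$ bound, and the remark that the simplex constraints add only $O(nm)=O(n)$ further inequalities.
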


\begin{proof}
\vspace{-7pt}

Each inequality in \eqref{eq:ineq} is linear in $\vecs(b)$ and together with 
the simplex constraints $\vecs(b)\ge0$, $\sum_i s_i=1$, they define a bounded convex polyhedron.  
Hence $\mathcal P_\mu(b)$ is a polytope, i.e., a bounded intersection of finitely many half-spaces inside the simplex. Summing over all $b$, the total number of inequalities is $O(n^2)$.  
\end{proof}

\noindent
\textbf{Illustration.}  
Figure~\ref{fig:simplex2} shows an example of a $\mathcal P_\mu(b)$ for a single agent $b$ when $m=3$. The simplex $\Lambda_2$ is visualized as an equilateral triangle in the plane, 
where each interior point corresponds to a valid salience vector of $b$.  
In this example, $H_\mu(b)=4$, so four potential blocking agents induce linear indifference constraints, drawn as dashed lines.  
The shaded 2-dimensional polytope (i.e., polygon) is their feasible intersection, consisting of all salience vectors for which $b$ does not deviate from~$\mu$.

\vspace{-6pt}
\begin{figure}[t]
\centering
\begin{tikzpicture}[scale=1.0]

  \coordinate (A) at (0,0);
  \coordinate (B) at (4,0);
  \coordinate (C) at (2,3.464); 

  \draw[thick] (A)--(B)--(C)--cycle;

  \node[below left] at (A) {$u_1$};
  \node[below right] at (B) {$u_2$};
  \node[above] at (C) {$u_3$};


  \coordinate (P1) at (1.3,0.7);
  \coordinate (P2) at (2.7,0.7);
  \coordinate (P3) at (2.4,1.7);
  \coordinate (P4) at (1.6,1.8);

  \filldraw[fill=gray!25,draw=black]
    (P1)--(P2)--(P3)--(P4)--cycle;

\usetikzlibrary{calc}

\draw (P1) -- (P2);
\draw[dashed]  ($(P1)!-0.6!(P2)$) -- ($(P1)!1.6!(P2)$);

\draw (P2) -- (P3);
\draw[dashed]  ($(P2)!-0.7!(P3)$) -- ($(P2)!2.7!(P3)$);

\draw (P3) -- (P4);
\draw[dashed]  ($(P3)!-0.8!(P4)$) -- ($(P3)!1.7!(P4)$);

\draw (P4) -- (P1);
\draw[dashed]  ($(P4)!-1.4!(P1)$) -- ($(P4)!1.7!(P1)$);

  \node at (2.0,1.2) {$\mathcal P_\mu(b)$};

\end{tikzpicture}
\vspace{-8pt}
\caption{Example of $\mathcal P_\mu(b)$ for $m=3$: the shaded polygon inside the simplex $\Lambda_2$.}
\label{fig:simplex2}
\vspace{-10pt}

\end{figure}
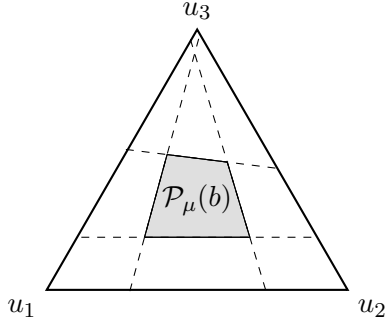

\medskip
Testing whether $S\in\mathcal P_\mu$ is simply the standard stability check 
for~$\mu$ under the salience profile~$S$, a polynomial-time check over all $O(n^2)$ pairs. This observation motivates a geometric analysis of the full stability region.
The complete analysis of the polyhedral structure and volume computation of 
$\mathcal P_\mu$ is provided in Appendix~\ref{app:volume}.
\section{Summary and Conclusions}
\label{sec:concl}

We introduced a new notion of robustness for stable matchings based on attributes and salience vectors.
We presented polynomial-time algorithms for several problems in this context, including verifying whether a given matching remains stable for a specified radius, computing its maximum stability radius, and approximating the most robust stable matching through efficiently computable bounds. We also extended the framework to incorporate costs, deriving computable upper and lower bounds that relate robustness to the cost of stability. Finally, we showed that the robustness region of a given matching factorizes as a product of low-dimensional polytopes within the simplex. Overall, our results establish a unified geometric and algorithmic foundation for analyzing stability under structured preference perturbations and open several promising directions for further study.

\section*{Acknowledgments}
This research has been partially supported by the Israel Science Foundation under grant 2544/24.

\appendix

\clearpage

\bibliographystyle{plain} 
\bibliography{sample}

\clearpage

\onecolumn

\begin{center}
\fbox{\Large\textbf{Supplementary Material}}
\end{center}

\medskip

\noindent
\textbf{Paper title:} Robustness of Stable Matchings When Attributes and Salience Determine Preferences

\smallskip

\noindent
\textbf{Submission ID:} 809

\bigskip

\appendix

\begin{table}[t]
\centering
\small
\caption{Computational complexity of the main problems.}\label{tab:complexity}
\begin{tabularx}{\linewidth}{@{}p{.55\linewidth}Y@{}}
\toprule
\textbf{Problem} & \textbf{Complexity / Result} \\
\midrule
Robustness Verification (RV) & $O(n^2 m^k)$ LP/SOCPs \\
Robustness Radius Computation (RRC) & $O(n^2 m^k)$ LP/SOCPs \\
Most Robust Stable Matching (FMRM) & $O(n^2 m^k)$ lower bound; LP/SOCP upper bound; anytime search \\
Robustness-Cost Tradeoff (RCB) & $O(n^3)$ for base frontier; $O(n^4 m^k)$ for LP lower bound \\
Geometry of Robustness Region (GRR) & Polyhedral characterization (exact volume derivation in appendix) \\
\bottomrule
\end{tabularx}

\medskip
\small
\noindent
\emph{Note.} When $k=m$, the combinatorial factor $m^k$ drops out, and the corresponding complexities simplify to their $O(n^2)$ or $O(n^4)$ forms.
\end{table}

\begin{table}[t]
\centering
\small
\caption{Notation used throughout the paper.}\label{tab:notation}
\begin{tabularx}{\linewidth}{@{}>{\raggedright\arraybackslash}p{.25\linewidth}X@{}}
\toprule
\textbf{Symbol} & \textbf{Meaning} \\
\midrule
$A=\{a_1,\dots,a_n\}$, $B=\{b_1,\dots,b_n\}$ & Two disjoint sets of agents, each of size $n$ \\
$\mu:A\to B$ & A matching (bijection between $A$ and $B$) \\
$\mu^{-1}(b)$ & The partner of $b$ under matching $\mu$ \\
$b\succ_a b'$ & Agent $a$ prefers $b$ to $b'$ (strict order) \\
$a\succ_b a'$ & Agent $b$ prefers $a$ to $a'$ \\
$\operatorname{rank}_a(b)$, $\operatorname{rank}_b(a)$ & Ordinal position of a partner in $a$’s or $b$’s list \\
$m$ & Attribute dimension (constant) \\
$\vecu(a)=(u_1(a),\dots,u_m(a))\in\mathbb{R}_{\ge0}^m$ & Attribute vector of agent $a$ (public) \\
$\simplex{m}$ & $(m{-}1)$-simplex $\{\,s\in\mathbb{R}_{\ge0}^m:\sum_i s_i=1\,\}$ \\
$\vecs(b)=(s_1(b),\dots,s_m(b))\in\simplex{m}$ & Salience vector of agent $b$ \\
$S=(\vecs^1,\dots,\vecs^n)\in(\simplex{m})^n$ & Salience profile of all $B$-side agents \\
$\vecs(b)\!\cdot\!\vecu(a)$ & Score of candidate $a$ under salience vector $\vecs(b)$ \\
$\mathcal{H}_\mu(b)=\{a\in A:\ b\succ_a \mu(a)\}$ & Set of $A$-agents who prefer $b$ to their current partner \\
$(a,b)$ blocking pair & $b\succ_a\mu(a)$ and $a\succ_b\mu^{-1}(b)$ \\
$\|\cdot\|_p$ & $\ell_p$ norm ($p\in\{1,2,\infty\}$) used for perturbation distance \\
$k$ & Support budget (number of salience coordinates allowed to change) \\
$\rad$ & Perturbation radius (tolerance to salience drift) \\
$\hat{\vecs}(b)=\dfrac{\vecs(b)+\boldsymbol{\delta}}{T}$ & Perturbed and renormalized salience vector \\
$T=\sum_i (s_i(b)+\delta_i)$, $\lambda=1/T$ & Pre- and post-normalization parameters \\
$\vecdelta\in\mathbb{R}^m$ & Additive perturbation vector \\
$\boldsymbol{\Delta}(b;a\mid\mu)=\vecu(\mu^{-1}(b))-\vecu(a)$ & Attribute-gap vector for $(a,b)$ under $\mu$ \\
$\gamma_S(b;a\mid\mu)=\vecs(b)\!\cdot\!\boldsymbol{\Delta}(b;a\mid\mu)$ & Score margin of $b$’s partner over candidate $a$ \\
$\rad^{\min}(b;a\mid Q)$ & Minimal perturbation radius that makes $(a,b)$ blocking under support $Q$ \\
$\rad^{\min}(b;a)$ & Minimal radius over all supports $Q$, $\min_{|Q|\le k}\rad^{\min}(b;a\mid Q)$ \\
$\rad^*(\mu)$ & Maximum robustness radius of matching $\mu$ \\
$\rad^{\mathrm{base}}(\mu)$ & Base inner radius (closed-form conservative guarantee) \\
$p^\star$ & Dual norm of $p$ ($1/p+1/p^\star=1$) \\
$U_{p^\star}(b)$ & Dual-norm attribute gap $\max_{a'\neq \mu^{-1}(b)}\|\vecu(\mu^{-1}(b))-\vecu(a')\|_{p^\star}$ \\
$\mathcal{P}_\mu$ & Robustness (stability) region of $\mu$ in salience space \\
$\mathcal{P}_\mu(b)$ & Local stability polytope for $b$ inside $\simplex{m}$ \\
$\mathrm{Vol}(\mathcal{P}_\mu)$ & Exact volume of robustness region $\mathcal{P}_\mu$ \\
$\mathbf{LB},\,\mathbf{UB}$ & Certified lower/upper bounds on the maximal robustness radius \\
$\mathbf{UB}_{\mathrm{frontier}}$ & Current frontier upper bound during the anytime search (tightens monotonically) \\
$C(\mu)=\sum_{a\in A}c_{a,\mu(a)}$ & Separable cost of matching $\mu$ \\
$C^*(\tau)$ & Minimum cost among matchings with robustness $\ge\tau$ \\
$\mathcal{SM}$ & Set of all stable matchings in the market \\
$\operatorname{conv}(\mathcal{SM})$ & Rothblum’s stable-marriage polytope (convex hull of all stable matchings) \\

\bottomrule
\end{tabularx}
\end{table}

\section{Illustrative Running Example: College Admissions}\label{app:example}

We illustrate the model using a small two-by-two college-student market with
two observable attributes: \emph{GPA} and \emph{SAT score} ($m=2$).
For convenience in this example, we normalize each student's attribute vector
so that its coordinates sum to one (this is not required by the model, but
simplifies numerical interpretation).

\paragraph{Agents and attributes.}
There are two students and two colleges:
\[
A=\{a_1,a_2\}, \qquad B=\{b_1,b_2\}.
\]
Each student $a_i$ is represented by an attribute vector
\[
\vecu(a_1)=(0.8,0.2), \qquad \vecu(a_2)=(0.4,0.6),
\]
where the first coordinate is the normalized GPA score and the second is
the normalized SAT score.

\paragraph{Salience (admission priorities).}
Each college~$b$ evaluates students by a salience vector
$\vecs(b)\in \Lambda_1$ representing the relative weight assigned to GPA and SAT:
\[
\vecs(b_1)=(0.7,0.3), \qquad \vecs(b_2)=(0.3,0.7).
\]
Hence college~$b_1$ favors GPA while $b_2$ emphasizes SAT.

\paragraph{Scores and preferences.}
The evaluation scores $\vecs(b)\!\cdot\!\vecu(a)$ are
\[
\begin{array}{c|cc}
 & a_1 & a_2 \\ \hline
b_1 & 0.7(0.8)+0.3(0.2)=0.62 & 0.7(0.4)+0.3(0.6)=0.46 \\[2pt]
b_2 & 0.3(0.8)+0.7(0.2)=0.38 & 0.3(0.4)+0.7(0.6)=0.54
\end{array}
\]
Thus $b_1$ prefers $a_1$ and $b_2$ prefers $a_2$ under the initial salience profile.  

Assume students have static preferences
\[
a_1\!: b_1\succ b_2, 
\qquad 
a_2\!: b_1\succ b_2,
\]
so both students prefer the more GPA-oriented college $b_1$.
Then the unique stable matching is
\[
\mu(a_1)=b_1,\qquad \mu(a_2)=b_2.
\]
Under this configuration, $(a_2,b_1)$ is the most likely blocking pair
once $b_1$ shifts its salience toward the SAT attribute.

\paragraph{Perturbations and robustness.}
Suppose college~$b_1$ shifts its emphasis toward the SAT attribute.
Starting from $\vecs(b_1)=(0.7,0.3)$, consider an additive perturbation
\[
\boldsymbol{\delta}=(-0.2,+0.3), \qquad 
\vecs(b_1)+\boldsymbol{\delta}=(0.5,0.6).
\]
To restore normalization, divide by $T=0.5+0.6=1.1$, giving
\[
\hat{\vecs}(b_1)=\tfrac{1}{1.1}(0.5,0.6)=(0.45,0.55).
\]
The perturbed margin becomes
\[
\hat{\vecs}(b_1)\!\cdot\!\boldsymbol{\Delta}(b_1;a_2\mid\mu)
=0.45(0.4)+0.55(-0.4)=-0.04,
\]
so $(a_2,b_1)$ now forms a blocking pair.

The $\ell_1$ distance between the original and normalized salience vectors is
$\|\hat{\vecs}(b_1)-\vecs(b_1)\|_1=0.5$,
while $\|\hat{\vecs}(b_1)-\vecs(b_1)\|_2\approx0.35$ and $\|\hat{\vecs}(b_1)-\vecs(b_1)\|_\infty=0.25$.
These distances illustrate the robustness radius $\rad^*(\mu)$
under $\ell_1$, $\ell_2$, and $\ell_\infty$ norms, respectively. Here $k=2$ (full support), though in this example the same blocking perturbation could also be achieved with $k=1$.

\paragraph{Support and normalization.}
Here $m=2$, so the perturbation uses the full support $Q=\{1,2\}$.
The normalization factor is $\lambda=1/T\approx0.91$, ensuring
$\sum_i\hat s_i(b_1)=1$ and $\hat{\vecs}(b_1)\in \Lambda_1$.
This post-normalized view corresponds exactly to the form used
in the convex programs of Sections~\ref{sec:verify-given}--\ref{sec:radius-given}.

\paragraph{Geometric intuition.}
Each $\vecs(b)$ lies on the simplex $\Lambda_1$ (the unit line segment).
The stability region $\mathcal{P}_\mu\subseteq(\simplex{2})^2$
collects all salience profiles for which no blocking pair arises.
When $k=m$, the robustness radius $\rad^*(\mu)$
equals the minimum distance from the current profile $S$
to the boundary of~$\mathcal{P}_\mu$ where a blocking pair first appears.

\section{Background: Deferred Acceptance and Rotations}\label{app:DA-rotations}

We work with two equally sized sets $A$ and $B$ ($n=|A|=|B|$); every agent has a \emph{complete, strict} preference list over the other side (no ties).
\paragraph{Matchings and stability (basics).}
A \emph{matching} is a bijection $\mu:A\to B$. For $a\in A$ write $\mu(a)$ for $a$'s partner in $B$, and for $b\in B$ write $\mu^{-1}(b)$ for $b$'s partner in $A$. 
For preferences, we use $b\succ_a b'$ to mean that $a$ prefers $b$ over $b'$, and $a\succ_b a'$ to mean that $b$ prefers $a$ over $a'$. 
A pair $(a,b)\in A\times B$ is a \emph{blocking pair} for $\mu$ if $b\succ_a \mu(a)$ and $a\succ_b \mu^{-1}(b)$. 
A matching $\mu$ is \emph{stable} if it admits no blocking pair.
\smallskip
We write $\succeq_a$ and $\succeq_b$ for the weak orders induced by $\succ_a$ and $\succ_b$,
respectively (i.e., $x\succeq_a y$ means $a$ weakly prefers $x$ to $y$; similarly for $b$).

\subsection{Deferred Acceptance (DA): from first principles}\label{app:DA}
\paragraph{B-proposing DA (one-to-one, strict, complete).}
Initialize all agents unmatched. While some $b\in B$ is unmatched and has not proposed to everyone:
\begin{enumerate}
  \item $b$ proposes to the most-preferred $a\in A$ that $b$ has not yet proposed to.
  \item $a$ tentatively keeps their favorite among all proposers so far \emph{and} their current tentative partner (if any), and rejects all others.
\end{enumerate}
Return the final tentative matches.

\begin{proposition}[Termination and basic invariants]\label{prop:DA-terminate}
B-proposing DA terminates after at most $n^2$ proposals, hence in $O(n^2)$ time. Throughout the run:
(i) each $b$ only moves \emph{down} their list; (ii) each $a$’s tentative partner is never worse (on $a$’s list) than at any earlier time; (iii) once rejected, a $b$ is never reconsidered by that $a$.
\end{proposition}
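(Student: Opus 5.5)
The plan is a potential-function argument over the run of $B$-proposing DA. For the termination bound, each agent $b\in B$ proposes to each $a\in A$ at most once, because in step~1 $b$ only proposes to agents it has not yet proposed to. There are $n$ agents in $B$ and $n$ possible targets, so the total number of proposals is at most $n\cdot n=n^2$. Each iteration of the while-loop makes exactly one proposal. Processing it (comparing the new proposer with $a$'s current tentative partner, then updating) takes $O(1)$ time if we precompute rank arrays $\operatorname{rank}_a(\cdot)$ and keep, for each $b$, a pointer into its list. Building these arrays takes $O(n^2)$ time, since $b$'s list is obtained once by sorting the scores $\vecs(b)\cdot\vecu(a)$ with tie-breaking, and $m$ is constant; strictly speaking, sorting adds a logarithmic factor, so the $O(n^2)$ claim should be read as counting the proposal phase with the preference lists given. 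The loop therefore runs at most $n^2$ times, and the algorithm terminates in $O(n^2)$ time.

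For the invariants, I would argue by induction over the proposals.

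(i) Each $b$ moves down its list. The pointer for $b$ only advances, since $b$ always proposes to its most-preferred agent among those not yet proposed to. Hence the sequence of agents $b$ proposes to is strictly decreasing in $\succ_b$.

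(ii) Each $a$'s tentative partner never gets worse. When $a$ receives a proposal, $a$ keeps its favorite among the new proposer and its current tentative partner. The new tentative partner is therefore weakly preferred, under $\succeq_a$, to the old one. Once $a$ is held, it is never left unheld: a held $b$ is only displaced when a better proposer replaces it. Chaining these comparisons gives monotonicity over time.

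(iii) A rejected $b$ is never reconsidered. If $a$ rejects $b$ at some time, $b$ will never propose to $a$ again, by the "not yet proposed" rule. Moreover, by (ii), $a$ then holds someone strictly better than $b$ for the rest of the run. So no later choice made by $a$ can involve $b$.

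The only delicate point is to make sure the loop guard and termination interact correctly. Termination itself is immediate from the bounded number of proposals. Whether the final matching is perfect is not part of this statement. Still, as a check I would note the standard fact: an unmatched $b$ with all $n$ proposals exhausted would mean every $a$ holds someone, by (ii). That would put $n$ agents held while $b$ is free, contradicting $|A|=|B|=n$. Hence the output is a perfect matching.
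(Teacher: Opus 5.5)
Your proposal is correct and is the same reasoning the paper relies on. The paper gives no formal proof, only the remark that proposers never climb back up their lists and acceptors only trade up; your count of at most one proposal per pair, together with the two monotonicity inductions, is a fleshed-out version of that remark. Your caveat is also a fair refinement the paper glosses over: the $O(n^2)$ time bound presumes the $B$-side lists are already available, since sorting the salience scores would cost $O(n^2\log n)$.
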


\begin{theorem}[Stability and optimality]\label{thm:DA-opt}
The output of B-proposing DA is stable (no blocking pairs) and is the $B$-optimal stable matching $\mu_B$ (every $b$ weakly prefers $\mu_B$ to any other stable matching). Symmetrically, A-proposing DA returns the $A$-optimal matching $\mu_A$.
\end{theorem}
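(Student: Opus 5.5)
The plan is to prove the two claims of Theorem~\ref{thm:DA-opt} in the classical Gale--Shapley order: first stability of the output of $B$-proposing DA, then $B$-optimality, and finally the symmetric statement for $A$-proposing DA by exchanging the roles of the two sides. Throughout I will use the invariants (i)--(iii) of Proposition~\ref{prop:DA-terminate}. I will also use the fact that, because $|A|=|B|=n$ and all lists are complete and strict, the output is a perfect matching. The reason is that once an $a$ receives a proposal it stays tentatively matched forever, by invariant (ii). So if some $b$ ended unmatched after proposing to all $n$ agents of $A$, all $n$ agents of $A$ would be held by the other $n-1$ agents of $B$, which is impossible.

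\emph{Stability.} Let $\mu$ be the output and suppose $(a,b)$ blocks, i.e.\ $a\succ_b \mu^{-1}(b)$ and $b\succ_a \mu(a)$. By invariant (i), $b$ proposes in order down its list and ends at $\mu^{-1}(b)$. Since $a$ lies above $\mu^{-1}(b)$ on $b$'s list, $b$ must have proposed to $a$ earlier and been rejected, either at once or later. At the moment of rejection, $a$ held someone it prefers to $b$. By invariant (ii), $a$'s final partner $\mu(a)$ is weakly better than that someone, so $\mu(a)\succ_a b$, a contradiction.

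\emph{$B$-optimality.} Call $a$ \emph{achievable} for $b$ if some stable matching pairs them. I will prove, by induction over the sequence of rejections, the invariant that no $b$ is ever rejected by an achievable partner. Suppose the first violation is that $a$ rejects $b$ while some stable $\mu'$ has $\mu'(a)=b$. Then $a$ keeps some $b'$ with $b'\succ_a b$. Because this is the first violation, $b'$ has not yet been rejected by any achievable partner of its own. Since $b'$ proposes in order, it therefore ranks $a$ weakly above every achievable partner, and in particular weakly above $\mu'^{-1}(b')$; as $\mu'^{-1}(b')\neq a$, the preference is strict, $a\succ_{b'}\mu'^{-1}(b')$. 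Together with $b'\succ_a b=\mu'(a)$, this makes $(a,b')$ a blocking pair of $\mu'$, contradicting stability. Hence every $b$ is matched in $\mu_B$ to its best achievable partner. Since each $b$ weakly prefers its best achievable partner to its partner in any stable matching, every $b$ weakly prefers $\mu_B$ to any other stable matching. The $A$-proposing statement follows verbatim with $A$ and $B$ swapped.

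The main obstacle is the optimality induction. One must carefully phrase ``first rejection by an achievable partner'' so that the inductive hypothesis legitimately shows $b'$ has not yet passed below its achievable partners. One must also handle the case where $b'$'s own achievable partner in $\mu'$ is $a$ itself, which is ruled out because $\mu'(a)=b\neq b'$. The stability part is routine given the invariants.
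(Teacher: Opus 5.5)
Your proof is correct and follows essentially the route the paper relies on: the paper states Theorem~\ref{thm:DA-opt} as classical background without proof, and your stability argument is exactly its one-line justification (proposers only move down their lists, acceptors only trade up). Your $B$-optimality induction on the first rejection by an achievable partner is the standard Gale--Shapley/Knuth argument the paper invokes by citation, and the details you include (the output is a perfect matching, and $\mu'^{-1}(b')\neq a$ gives strictness) are the right ones, so nothing is missing.
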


DA is “propose-hold-reject’’: proposers never climb back up their lists; acceptors only trade up. These two monotonicities force termination and rule out blocking pairs in the limit.

\paragraph{Lattice of stable matchings.}
For matchings $\mu,\mu'$, write $\mu \preceq \mu'$ iff every $b\in B$ weakly prefers
its partner in $\mu'$ to its partner in $\mu$ (i.e., $\mu'(b)\succeq_b \mu(b)$).
With this partial order, the set of stable matchings forms a distributive lattice with
bottom $\mu_A$ and top $\mu_B$; for any stable $\mu$,
\[
\mu_A \ \preceq\ \mu \ \preceq\ \mu_B.
\]

\subsection{Rotations and the rotation poset}\label{app:rotations}
\begin{definition}[Rotation]\label{def:rotation}
Let $\mu$ be a stable matching. A \emph{rotation} exposed at $\mu$ (i.e., currently eliminable at $\mu$) is a cyclic sequence
\[
\rho=\big((a_1,b_1),(a_2,b_2),\dots,(a_\nu,b_\nu)\big),
\]
such that the following hold:
(i) $\mu(a_i)=b_i$ for all $i=1,\dots,\nu$;
(ii) for $i=1,\dots,\nu-1$, $a_{i+1}$ is the most-preferred agent \emph{above} $a_i$ on $b_i$’s list who would accept $b_i$ if $b_i$ left $a_i$, and in addition $a_1$ is the most-preferred agent \emph{above} $a_\nu$ on $b_\nu$’s list who would accept $b_\nu$ if $b_\nu$ left $a_\nu$;
(iii) for $i=1,\dots,\nu-1$, $a_{i+1}$ prefers $b_i$ to $b_{i+1}=\mu(a_{i+1})$, and in addition $a_1$ prefers $b_\nu$ to $b_1=\mu(a_1)$.
\end{definition}

Figure~\ref{fig:rotation-cycle} illustrates a rotation $\rho=((a_1,b_1),\dots,(a_\nu,b_\nu))$ exposed at $\mu$: gray edges show the current pairs $(a_i,b_i)$, and eliminating $\rho$ reassigns $b_i$ to $a_{i+1}$ for $i=1,\dots,\nu-1$ and $b_\nu$ to $a_1$, yielding the stable matching $\mu'=\operatorname{elim}(\mu,\rho)$. By Proposition~\ref{prop:elim}, every $b\in B$ weakly improves (strictly for $b\in\{b_1,\dots,b_\nu\}$) and every $a\in A$ weakly worsens (strictly for $a\in\{a_1,\dots,a_\nu\}$), with all other agents unchanged.

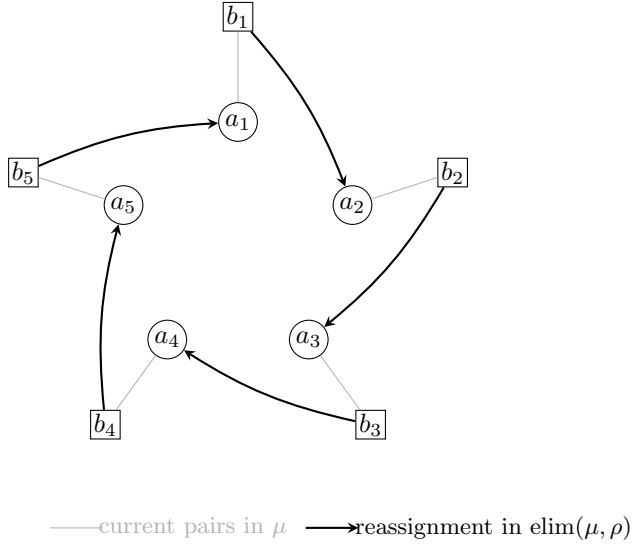
\begin{figure}[t]
\centering
\begin{tikzpicture}[>=stealth, every node/.style={font=\small}]
  \def\n{5}
  \def\rA{1.6}
  \def\rB{3.0}

  \foreach \i in {1,...,\n} {
    \node[circle,draw,inner sep=1.2pt] (a\i) at ({90-360/\n*(\i-1)}:\rA) {$a_{\i}$};
    \node[rectangle,draw,inner sep=1.2pt] (b\i) at ({90-360/\n*(\i-1)}:\rB) {$b_{\i}$};
    \draw[gray!60] (a\i) -- (b\i);
  }

  \foreach \i/\j in {1/2,2/3,3/4,4/5,5/1} {
    \draw[->,thick] (b\i) to[bend left=10] (a\j);
  }

  \node[gray!60] at (-0.6,-3.8) {\footnotesize current pairs in $\mu$};
  \draw[gray!60] (-1.8,-3.8) -- (-2.5,-3.8);
  \draw[->,thick] (0.9,-3.8) -- (1.6,-3.8);
  \node at (3.4,-3.8) {\footnotesize reassignment in $\operatorname{elim}(\mu,\rho)$};
\end{tikzpicture}
\caption{Rotation schematic (exposed at $\mu$).
Gray edges show the current pairs $(a_i,b_i)$; arrows show the reassignment in $\operatorname{elim}(\mu,\rho)$.}
\label{fig:rotation-cycle}
\end{figure}

\begin{proposition}[Elimination]\label{prop:elim}
Eliminating $\rho$ from $\mu$-reassigning each $b_i$ to $a_{i+1}$ and leaving all others unchanged-
yields another stable matching, denoted $\operatorname{elim}(\mu,\rho)$. Moreover, if
$\mu'=\operatorname{elim}(\mu,\rho)$ then:
\begin{itemize}
  \item for every $b\in B$, $\mu'(b)\succeq_b \mu(b)$, with strict improvement
        for $b\in\{b_1,\dots,b_\nu\}$ and $\mu'(b)=\mu(b)$ for $b\notin\{b_1,\dots,b_\nu\}$;
  \item symmetrically, for every $a\in A$, $\mu'(a)\preceq_a \mu(a)$, with strict worsening
        for $a\in\{a_1,\dots,a_\nu\}$ and $\mu'(a)=\mu(a)$ otherwise.
\end{itemize}
\end{proposition}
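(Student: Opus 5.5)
Since the text flags Definition~\ref{def:rotation} as the source of trouble, I will first fix its reading and then argue directly from the definitions, by a case analysis on a hypothetical blocking pair of $\mu'=\operatorname{elim}(\mu,\rho)$.

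\textbf{Fixing the reading of the definition.} Read literally, condition~(iii) together with~(ii) would make $(a_{i+1},b_i)$ a blocking pair of the stable matching $\mu$, so no rotation could ever be exposed. I will therefore use the standard Gusfield--Irving reading, which matches the claimed directions ($B$ improves, $A$ worsens). With indices taken cyclically:
\begin{itemize}
\item[(R1)] $b_i$ prefers $a_{i+1}$ to $a_i$;
\item[(R2)] $a_{i+1}$ prefers $b_{i+1}=\mu(a_{i+1})$ to $b_i$;
\item[(R3)] $b_i$ is the \emph{first} agent after $b_{i+1}$ on $a_{i+1}$'s list who prefers $a_{i+1}$ to its own $\mu$-partner.
\end{itemize}

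\textbf{Step 1: the monotonicity claims.} These follow immediately from the definitions. The map $\mu'$ is a bijection, since it permutes the partners inside the cycle. Each $b_i$ moves from $a_i$ to $a_{i+1}$, a strict improvement by (R1). Each $a_{i+1}$ moves from $b_{i+1}$ to $b_i$, a strict worsening by (R2). All agents outside the cycle keep their partners.

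\textbf{Step 2: stability of $\mu'$.} Suppose $(a,b)$ blocks $\mu'$, so $b\succ_a\mu'(a)$ and $a\succ_b\mu'^{-1}(b)$. By Step~1, $\mu'^{-1}(b)\succeq_b\mu^{-1}(b)$, hence $a\succ_b\mu^{-1}(b)$ in every case. I split on the position of $a$.

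\begin{itemize}
\item \emph{Case 1: $a$ is not in the rotation.} Then $\mu'(a)=\mu(a)$, so $b\succ_a\mu(a)$. Together with $a\succ_b\mu^{-1}(b)$, the pair $(a,b)$ blocks $\mu$, a contradiction.
\item \emph{Case 2: $a=a_{i+1}$ is in the rotation.} Then $b\succ_a b_i$. There are three subcases.
  \begin{itemize}
  \item[(2a)] $b\succ_a b_{i+1}$. Then $(a,b)$ blocks $\mu$, a contradiction.
  \item[(2b)] $b=b_{i+1}$. Now $\mu'^{-1}(b_{i+1})=a_{i+2}$, and $a_{i+2}\succ_{b_{i+1}} a_{i+1}$ by (R1). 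So $b$ does not prefer $a$ to its new partner, and there is no block.
  \item[(2c)] $b$ lies strictly between $b_{i+1}$ and $b_i$ on $a$'s list. By the minimality in (R3), $b$ does not prefer $a$ to $\mu^{-1}(b)$. This contradicts $a\succ_b\mu^{-1}(b)$.
  \end{itemize}
\end{itemize}
Hence no blocking pair exists, and $\mu'$ is stable.

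\textbf{Expected obstacle.} The case analysis itself is routine. The main difficulty is justifying the repaired interpretation of conditions~(ii)--(iii). Subcase~(2c) is exactly where the first-choice property (R3) is needed, so I would state this property explicitly and cite \cite{GusfieldIrving89} for its standard form.
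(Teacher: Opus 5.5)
Your proof is correct. It is the standard Gusfield--Irving argument that eliminating an exposed rotation preserves stability. The paper gives no proof of this proposition; it treats it as classical background (cf.~\cite{GusfieldIrving89,Manlove13}), so there is no different route to compare. Your diagnosis of Definition~\ref{def:rotation} is also right: read literally, (ii) and (iii) together make $(a_{i+1},b_i)$ a blocking pair of $\mu$.

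You should state one point more firmly than your ``expected obstacle'' does. (R3) cannot be justified as a reading of the paper's (ii): (ii) constrains $b_i$'s list, whereas (R3) constrains $a_{i+1}$'s list. Without (R3) the proposition is false.

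For a counterexample, take the lists (best first)
$a_1\colon b_1,b_2,b_3$; $a_2\colon b_2,b_3,b_1$; $a_3\colon b_3,b_1,b_2$; $b_1\colon a_2,a_1,a_3$; $b_2\colon a_1,a_2,a_3$; $b_3\colon a_2,a_3,a_1$.
Let $\mu=\{(a_1,b_1),(a_2,b_2),(a_3,b_3)\}$, which is stable since every $a$ has its first choice, and let $\rho=((a_1,b_1),(a_2,b_2))$.
\begin{itemize}
\item Conditions (i), (R1) and (R2) all hold, and each $a_{i+1}$ is even the top choice of $b_i$.
\item (R3) fails, because $b_3$ lies between $b_2$ and $b_1$ on $a_2$'s list and prefers $a_2$ to $a_3$.
\item After elimination, $(a_2,b_3)$ blocks. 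This is exactly your subcase (2c).
\end{itemize}
So the correct move is simply to adopt the standard definition with (R3), as you do, not to derive it from the paper's wording.

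A minor simplification: in (2b) you can stop immediately. There $a\succ_b\mu^{-1}(b)$ would read $a_{i+1}\succ_{b_{i+1}}a_{i+1}$, which is impossible.
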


One can realize $\operatorname{elim}(\mu,\rho)$ by “breaking’’ one pair in $\rho$ and letting DA continue \emph{only} among the affected agents; the resulting local chain of proposals is exactly the rotation update.

\paragraph{Poset, encoding, and size.}
Define a partial order on rotations by
\[
\rho \preceq \rho' \quad\Longleftrightarrow\quad
\text{every elimination sequence that exposes $\rho'$ eliminates $\rho$ first},
\]
and write $\rho \prec \rho'$ for $\rho \preceq \rho'$ and $\rho\neq\rho'$.

\begin{proposition}[Rotation poset and down-set encoding]\label{prop:poset}
Every stable matching corresponds uniquely to a \emph{down-set} $D$ of the rotation poset (closed under predecessors), with
\[
\mu\ =\ \operatorname{elim}(\mu_A, D).
\]
There are $O(n^2)$ rotations in total, and any chain from $\mu_A$ to $\mu_B$ eliminates at most $O(n^2)$ rotations.
\end{proposition}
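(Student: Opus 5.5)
The statement to prove is Proposition~\ref{prop:poset}. It has three parts: the bijection between stable matchings and down-sets, the bound on the number of rotations, and the bound on chain length. I would prove them in the order: chain/count bounds first, then the down-set correspondence. Everything rests on Proposition~\ref{prop:elim}.

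\emph{Counting rotations and chain length.} By Proposition~\ref{prop:elim}, eliminating an exposed rotation $\rho=((a_1,b_1),\dots,(a_\nu,b_\nu))$ makes each $a_i$ strictly worse and every other $A$-agent unchanged. Consider the potential
\[
\Phi(\mu)=\sum_{a\in A}\operatorname{rank}_a(\mu(a)).
\]
It lies in $[n,n^2]$ and increases by at least $\nu\ge 2$ with each elimination. Hence any elimination sequence from $\mu_A$ has length at most $n^2$. Every stable $\mu\ne\mu_B$ has an exposed rotation; I would take this from the standard theory \cite{GusfieldIrving89}, via the ``break a pair and resume DA'' description after Proposition~\ref{prop:elim}. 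Since $\mu_A$ is the bottom of the lattice, every maximal chain from $\mu_A$ ends at $\mu_B$ and so has $O(n^2)$ steps.

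For the total number of rotations, I would charge each rotation to the pairs $(a_i,b_{i})\mapsto(a_{i+1},b_i)$ it destroys. A pair $(a,b)$ that is destroyed can never reappear, because $a$ only moves down its list. Therefore each of the $n^2$ pairs belongs to at most one rotation's ``removed'' set. This gives at most $n^2$ distinct rotations, since each rotation removes at least one pair.

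\emph{Down-set correspondence.} Fix a stable $\mu$ and any elimination sequence $\mu_A=\mu_0\to\mu_1\to\dots\to\mu_t=\mu$. Let $D$ be the set of rotations used. First I would show that $D$ is independent of the chosen sequence. A rotation $\rho$ is eliminated iff the pairs $(a_i,b_i)$ of $\rho$ have been destroyed and $(a_{i+1},b_i)$ created. By the monotonicity of each $a$ (it only moves down) and the fact that the pair $(a,b)$ can appear at most once on any chain, membership of $\rho$ in $D$ is determined by $\mu$ alone. Concretely, $\rho\in D$ iff each $a_i$ is matched under $\mu$ weakly below $b_{i-1}$.

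Next, $D$ is a down-set. If $\rho'\in D$ and $\rho\preceq\rho'$, then by the definition of $\preceq$ the very sequence that exposed $\rho'$ eliminated $\rho$ first, so $\rho\in D$.

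Conversely, given a down-set $D$, I would eliminate its rotations in any linear extension. Each rotation is exposed when its turn comes because all its predecessors are already removed. I expect this to be the main obstacle: showing that ``all predecessors eliminated'' actually implies exposure, and not merely that some sequence exposes it. I would first try an exchange argument. Suppose two different sequences reach matchings containing the same rotation set. Using the diamond property of distributive lattices (if $\rho,\rho'$ are both exposed at $\mu$, eliminating one leaves the other exposed), I would commute adjacent eliminations until the sequences agree.

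\emph{Uniqueness and conclusion.} Since $\mu$ is recovered from $D$ by applying the reassignments of Proposition~\ref{prop:elim}, distinct down-sets give distinct matchings, because their sets of created pairs differ. Together with the previous steps, this yields the claimed bijection $\mu=\operatorname{elim}(\mu_A,D)$.
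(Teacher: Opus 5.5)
The paper gives no proof of this proposition; it quotes it as classical background from Gusfield--Irving, so your argument has to stand on its own. Your chain bound via $\Phi(\mu)=\sum_{a}\operatorname{rank}_a(\mu(a))$ is correct and complete.

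The rest has a genuine gap. It uses facts that hold only along a \emph{single} elimination sequence to draw conclusions about \emph{all} sequences.

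In the rotation count, ``a destroyed pair never reappears'' only shows that each pair $(a,b)$ is removed at most once on one chain. It does not exclude two distinct rotations, exposed at stable matchings $\mu\neq\mu'$ on different chains with $(a,b)\in\mu\cap\mu'$, from both removing $(a,b)$. The rotation through $(a,b)$ is built from $\mu$-dependent ``next acceptable partner'' pointers, so a priori it can depend on $\mu$.

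The same problem affects your path-independence claim. The direction ``$\rho\in D$ implies each $a_i$ is weakly below $b_{i-1}$ in $\mu$'' does follow from monotonicity. The converse needs that the \emph{only} way $a_i$ can leave $b_i$ on any chain is by eliminating $\rho$ itself. Monotonicity plus ``each pair appears once per chain'' does not give that, and your uniqueness argument (``distinct down-sets create different pairs'') inherits the same gap.

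The missing ingredient is the central structural lemma of rotation theory. If $\rho$ moves $a_i$ from $b_i$ to $b_{i-1}$, then no stable matching pairs $a_i$ with an agent strictly between $b_i$ and $b_{i-1}$ on $a_i$'s list; this is proved using meets and joins of stable matchings. From it one derives three facts:
\begin{enumerate}
\item $\rho$ is exposed at $\mu$ iff $\mu$ contains all pairs $(a_i,b_i)$ of $\rho$;
\item each pair lies in at most one rotation;
\item every maximal chain from $\mu_A$ to $\mu_B$ eliminates every rotation exactly once.
\end{enumerate}
Only with these do your $O(n^2)$ count and your characterization of $D$ go through.

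The existence direction, which you rightly flag as the main obstacle, is not resolved by your exchange argument. Commuting adjacent eliminations via the diamond property shows that two sequences with the same rotation set end at the same matching. It does not show that a rotation $\rho$ whose predecessors all lie in $D\setminus\{\rho\}$ is exposed at $\mu'=\operatorname{elim}(\mu_A,D\setminus\{\rho\})$.

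With the lemma above it does follow. By the characterization and the lemma, each $a_i$ is weakly above $b_i$ in $\mu'$. Suppose some $a_i$ were strictly above $b_i$. Then the unique rotation creating $(a_i,b_i)$ would be eliminated in every sequence exposing $\rho$. Hence it precedes $\rho$ and lies in the down-set $D\setminus\{\rho\}$, which would already put $a_i$ at or below $b_i$ in $\mu'$, a contradiction. So $\mu'$ contains all pairs of $\rho$, and $\rho$ is exposed.

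Finally, you assume without argument that every stable $\mu$ is reachable from $\mu_A$ by eliminations. ``Every $\mu\neq\mu_B$ has an exposed rotation'' does not give this, because that rotation may lead away from $\mu$. You need the stronger lemma: for stable $\mu\preceq\nu$ with $\mu\neq\nu$, some rotation $\rho$ exposed at $\mu$ satisfies $\operatorname{elim}(\mu,\rho)\preceq\nu$.
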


\begin{definition}[Rotation DAG / Hasse diagram]\label{def:precedence-dag}
Nodes are rotations. There is an arc $\rho\!\to\!\rho'$ iff $\rho\prec\rho'$ and there is no $\tilde\rho$ with
$\rho\prec\tilde\rho\prec\rho'$ (a covering edge). The transitive closure recovers the rotation poset.
This is exactly the "rotation DAG" referenced in the main text.
\end{definition}

In particular, $\operatorname{elim}(\mu_A,D)$ denotes eliminating all rotations in $D$ in any
linear extension of the precedence order; the result is well defined (independent of the chosen extension) because $D$ is a down-set of the rotation poset and any two incomparable rotations commute (in particular, rotations on disjoint sets of agents commute).

\begin{proposition}[Construction in $O(n^2)$]\label{prop:compute-rot}
From the DA \emph{shortlists}, the arcs that survive rejections during the DA run (i.e., edges consistent with the final tentative choices), one can construct all rotations and their precedence DAG in $O(n^2)$ time and space.
\end{proposition}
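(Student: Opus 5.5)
The plan is to follow the classical Gusfield--Irving construction of rotations from the DA shortlists, adapting it to our setting where $B$'s lists are computed from salience scores. First I would precompute, for every $b\in B$, the strict ranking of $A$ by $\vecs(b)\cdot\vecu(\cdot)$ with $\prec^{\mathrm{tie}}$ breaking ties. This takes $O(n^2 m)$ score evaluations plus sorting, which is $O(n^2)$ up to logarithmic factors for fixed $m$. I would also build the inverse rank tables $\operatorname{rank}_a(\cdot)$ and $\operatorname{rank}_b(\cdot)$, so that every "does $x$ prefer $y$ to $z$" query costs $O(1)$.

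Next I would run both $A$-proposing and $B$-proposing DA, using Proposition~\ref{prop:DA-terminate} and Theorem~\ref{thm:DA-opt} to get $\mu_A$ and $\mu_B$ in $O(n^2)$. The shortlist of each agent is obtained by deleting from $b$'s list every $a$ ranked above $\mu_B^{-1}(b)$ or below $\mu_A^{-1}(b)$, and symmetrically for the $A$ side. The key invariant is the standard one: every stable pair survives in the reduced lists. This follows from the optimality statements in Theorem~\ref{thm:DA-opt}.

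Starting from $\mu_A$, I would find an exposed rotation by following, for each $b$ whose current partner is not its $\mu_B$-partner, the pointer "next $a$ above the current partner on $b$'s reduced list who prefers $b$ to his current partner". Each $b$ has at most one such successor, so the pointer graph has out-degree one, and any cycle in it is a rotation in the sense of Definition~\ref{def:rotation}. Eliminating that rotation via Proposition~\ref{prop:elim} gives the next stable matching. Pointers only advance monotonically along the lists, so the total pointer movement over the whole chain from $\mu_A$ to $\mu_B$ is $O(n^2)$. This yields all rotations (Proposition~\ref{prop:poset}) in $O(n^2)$ amortized time, provided the cycle detection maintains a path and stack across eliminations rather than restarting from scratch.

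For the precedence DAG, I would use the two standard labeling rules. Type 1: label each pair $(a,b)$ with the rotation that moves $b$ onto $a$ or away from $a$. Type 2: when a rotation moves $b$ past some $a$ whose removal from the list is caused by another rotation, add an arc between the two rotations. Scanning each list once gives $O(n^2)$ arcs, whose transitive closure equals $\preceq$.

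The main obstacle is proving that this sparse arc set generates exactly the poset, with no missing predecessors. Here I would cite the Gusfield--Irving characterization rather than reprove it, checking only that its hypotheses hold in our model: strict and complete lists on both sides, which our tie-breaking rule guarantees.
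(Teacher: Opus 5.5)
The step that fails is your pointer rule. You follow, for each $b$, the next $a$ with $a\succ_b\mu^{-1}(b)$ (above the current partner on $b$'s list) and $b\succ_a\mu(a)$ (who prefers $b$ to his current partner). That is exactly a blocking pair for $\mu$. Every matching on your chain from $\mu_A$ is stable, so the pointer is never defined, the pointer graph has no arcs, and no rotation is ever found. In a genuine rotation, the agent $a_{i+1}$ who receives $b_i$ prefers his current partner $b_{i+1}$ to $b_i$, which is the opposite of your condition. You inherited this from Definition~\ref{def:rotation}: its conditions (ii) and (iii) together also assert that $(a_{i+1},b_i)$ blocks $\mu$. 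So ``a rotation in the sense of Definition~\ref{def:rotation}'' is not a usable target. You should work with the classical Gusfield--Irving definition, which is all the paper relies on here; it gives no proof of this proposition beyond that citation.

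The fix is to put the pointer on the side whose partners get worse along the chain. Start from $\mu_A$. For each $a$ with $\mu(a)\neq\mu_B(a)$, let $s_\mu(a)$ be the first $b$ strictly below $\mu(a)$ on $a$'s list such that $a\succ_b\mu^{-1}(b)$, and set $\mathrm{next}(a)=\mu^{-1}(s_\mu(a))$.
\begin{itemize}
\item This map sends every such $a$ to another such agent, so its functional graph contains a cycle.
\item Its cycles are exactly the exposed rotations, and eliminating one moves each $a$ to $s_\mu(a)$.
\item Your amortization now holds for the right reason. $B$-partners only improve, so once $b$ prefers its partner to $a$ it keeps doing so. Hence $s_\mu(a)$ only moves down $a$'s list, giving $O(n^2)$ total scanning.
\end{itemize}
Equivalently, you could keep $B$-side pointers, but then start from $\mu_B$ and scan \emph{downward} on $b$'s list for the first $a$ with $b\succ_a\mu(a)$.

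There are also three smaller issues.
\begin{itemize}
\item To conclude that one maximal chain yields \emph{all} rotations, you need the Gusfield--Irving fact that every maximal chain eliminates each rotation exactly once. Proposition~\ref{prop:poset} does not state this.
\item Your type~2 rule needs a direction. If $\rho$ moves $a$ past $b$ on $a$'s list, then the rotation (if any) that moves $b$ from a partner worse than $a$ to one better than $a$ precedes $\rho$.
\item The $O(n^2\log n)$ sorting step lies outside the statement's premise, since the shortlists are given, so it should not be folded into the $O(n^2)$ claim.
\end{itemize}
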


Think of rotations as nodes in a DAG; picking a down-set $D$ means choosing exactly which rotations to eliminate. Independent rotations commute, so their internal order does not matter.

\paragraph{Facts used in the paper.}
We repeatedly use: (i) $\mu_A \preceq \mu \preceq \mu_B$; (ii) additive objectives decompose over $D$ via $\mu=\operatorname{elim}(\mu_A,D)$; (iii) the $O(n^2)$ bound on rotations and exposure constraints underpins all $O(n^2)$-scale checks in the paper.

\section{Convex Program Formulations}\label{app:convex}

\subsection{Second-Order Cone Programming (SOCP)}\label{app:SOCP}
A \emph{second-order cone program (SOCP)} is a convex optimization problem of the form
\[
\begin{array}{ll}
\text{minimize}   & \mathbf{c}^\top \mathbf{x} \\[3pt]
\text{subject to} & \|\mathbf{A}_i \mathbf{x} + \mathbf{b}_i\|_2 \;\le\; \mathbf{c}_i^\top \mathbf{x} + d_i,\quad i=1,\dots,k,\\[3pt]
                  & \mathbf{F}\mathbf{x} = \mathbf{g}.
\end{array}
\]

Our $p=2$ feasibility checks fit this framework, since the constraint 
$\|\hat{\vecs}(b)-\vecs(b)\|_2\le r$
defines a second-order cone.

\subsection{LP/SOCP Formulations for Robustness Verification}\label{app:verify-formulations}

For completeness, we list the convex programs used in Section~\ref{sec:verify-given} 
to test whether a perturbation of raius~$r$ can make $(a,b)$ a blocking pair.  
Each instance enforces the admissible perturbation constraints for a given $(a,b,Q)$ with $|Q|\le k$.

\smallskip
\noindent\emph{$p=1$ (Manhattan distance) – LP feasibility}
\[
\begin{aligned}
\text{find }&\hat{\vecs}(b),\ \lambda>0,\ \mathbf{z} \in\mathbb{R}^m_{\ge0}\\
\text{s.t. }&
\sum_i \hat s_i(b)=1,\quad \hat s_i(b)\ge0,\\
&\hat s_i(b)=\lambda\, s_i(b)\ (i\notin Q),\\
&\hat{\vecs}(b)\cdot\boldsymbol{\Delta}(b;a\mid\mu)\le0,\\
&z_i\ge \hat s_i(b)-s_i(b),\quad 
z_i\ge s_i(b)-\hat s_i(b)\ (\forall i),\\
&\sum_i z_i\le r.
\end{aligned}
\]
\emph{Variables:} $\hat{\vecs}(b)\in\mathbb{R}^m_{\ge0}$, $\lambda>0$, $\mathbf{z}\in\mathbb{R}^m_{\ge0}$.

\smallskip
\noindent\emph{$p=2$ (Euclidean distance) – SOCP feasibility}
\[
\begin{aligned}
\text{find }&\hat{\vecs}(b),\ \lambda>0\\
\text{s.t. }&
\sum_i \hat s_i(b)=1,\quad \hat s_i(b)\ge0,\\
&\hat s_i(b)=\lambda\, s_i(b)\ (i\notin Q),\\
&\hat{\vecs}(b)\cdot\boldsymbol{\Delta}(b;a\mid\mu)\le0,\\
&\|\hat{\vecs}(b)-\vecs(b)\|_2\le r.
\end{aligned}
\]
\emph{Variables:} $\hat{\vecs}(b)\in\mathbb{R}^m_{\ge0}$, $\lambda>0$.

\smallskip
Both formulations test feasibility of a convex region defined by linear or second-order constraints.
If all instances are infeasible, the matching $\mu$ is $(k,r,p)$-robust; otherwise, a feasible instance identifies a blocking pair.

\subsection{Explicit LP/SOCP Formulations for the Maximum Robustness Radius}\label{app:radius-formulations}

This appendix provides the explicit convex programs used to compute 
$\rad^{\min}(b;a\mid Q)$ as defined in Definition~\ref{def:pairwise-thresholds}.
Each instance minimizes the perturbation radius~$r$ required to make $(a,b)$ a blocking pair, 
subject to the admissible perturbation constraints.
The formulation depends on the choice of the norm~$p$.
All variables and constraints follow the same conventions as in Section~\ref{sec:verify-given}.

For each pair $(a,b)$ and support $Q\subseteq[m]$ with $|Q|\le k$, 
the optimization problem takes one of the following forms.

\smallskip
\noindent\emph{$p=\infty$ (box distance) - LP for $(a,b,Q)$}
\[
\begin{aligned}
\min\ & r\\
\text{s.t. }&
\sum_i \hat s_i(b)=1,\quad \hat s_i(b)\ge0,\\
&\hat s_i(b)=\lambda\, s_i(b)\ (i\notin Q),\quad \lambda>0,\\
&\hat{\vecs}(b)\cdot\boldsymbol{\Delta}(b;a\mid\mu)\le0,\\
&-r\le \hat s_i(b)-s_i(b)\le r\quad(\forall i),\\
&r\ge0.
\end{aligned}
\]
\emph{Variables:} $\hat{\vecs}(b)\in\mathbb{R}^m_{\ge0}$, $\lambda>0$, $r\ge0$.

\bigskip
\noindent\emph{$p=1$ (Manhattan distance) - LP for $(a,b,Q)$}
\[
\begin{aligned}
\min\ & r\\
\text{s.t. }&
\sum_i \hat s_i(b)=1,\quad \hat s_i(b)\ge0,\\
&\hat s_i(b)=\lambda\, s_i(b)\ (i\notin Q),\quad \lambda>0,\\
&\hat{\vecs}(b)\cdot\boldsymbol{\Delta}(b;a\mid\mu)\le0,\\
&z_i\ge \hat s_i(b)-s_i(b),\quad 
z_i\ge s_i(b)-\hat s_i(b)\ (\forall i),\\
&\sum_i z_i\le r,\quad z_i\ge0\ (\forall i),\\
&r\ge0.
\end{aligned}
\]
\emph{Variables:} $\hat{\vecs}(b)\in\mathbb{R}^m_{\ge0}$, $\lambda>0$, $\mathbf{z}\in\mathbb{R}^m_{\ge0}$, $r\ge0$.

\bigskip
\noindent\emph{$p=2$ (Euclidean distance) - SOCP for $(a,b,Q)$}
\[
\begin{aligned}
\min\ & r\\
\text{s.t. }&
\sum_i \hat s_i(b)=1,\quad \hat s_i(b)\ge0,\\
&\hat s_i(b)=\lambda\, s_i(b)\ (i\notin Q),\quad \lambda>0,\\
&\hat{\vecs}(b)\cdot\boldsymbol{\Delta}(b;a\mid\mu)\le0,\\
&\|\hat{\vecs}(b)-\vecs(b)\|_2\le r,\\
&r\ge0.
\end{aligned}
\]
\emph{Variables:} $\hat{\vecs}(b)\in\mathbb{R}^m_{\ge0}$, $\lambda>0$, $r\ge0$.

\bigskip
\noindent
Since LP and SOCP infeasibility can be decided in polynomial time 
by interior-point methods~\cite{AlizadehGoldfarb03,BoydVandenberghe04}, 
and since $m$ and $k$ are constants, the overall runtime is polynomial in~$n$.

\section{Volume of the Robustness Region}\label{app:volume}

This appendix provides the full analysis of the volume computation for the robustness region~$\mathcal P_\mu$.
In general, computing the exact volume of a polytope defined by linear inequalities is \#P--hard~\cite{DyerFrieze88}.  
However, in our model $\mathcal P_\mu$ admits a product structure across the $B$–agents:
$\mathcal P_\mu = \prod_{b\in B} \mathcal P_\mu(b),$
where each factor $\mathcal P_\mu(b)$ is a rational polytope of affine dimension~$(m{-}1)$ inside~$\simplex{m}$,
bounded by $O(n)$ halfspaces.
This special structure enables both exact and approximate volume computation in polynomial time for fixed~$m$.

\begin{theorem}[Exact volume in polynomial time]\label{thm:volume-poly}
For any fixed $m$, the exact volume $\mathrm{Vol}(\mathcal P_\mu)$ can be computed in time polynomial in~$n$.
\end{theorem}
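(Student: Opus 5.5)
The plan is to reduce the volume of $\mathcal P_\mu$ to a polynomial number of fixed-dimension volume computations via Lemma~\ref{lem:factorization}, and then to use the fact that polytope volume is easy in constant dimension.

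\textbf{Step 1: reduce to the factors.} By Lemma~\ref{lem:factorization}, $\mathcal P_\mu=\prod_{b\in B}\mathcal P_\mu(b)$. Volume is taken with respect to the product of the $(m-1)$-dimensional Lebesgue measures on the affine hull of $\simplex{m}$. Fubini then gives
\[
\mathrm{Vol}(\mathcal P_\mu)=\prod_{b\in B}\mathrm{Vol}_{m-1}\bigl(\mathcal P_\mu(b)\bigr).
\]
It therefore suffices to compute each factor exactly in time polynomial in $n$. The final answer is a product of $n$ rationals, and I must check that its bit length stays polynomial. This should be routine, since each factor's numerator and denominator have polynomial bit size, so the product has at most $n$ times that size. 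If some factor is lower-dimensional, its volume is $0$, and this is detected below.

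\textbf{Step 2: coordinates for one factor.} To make $\mathcal P_\mu(b)$ full-dimensional, eliminate $s_m=1-\sum_{i<m}s_i$. This is a linear bijection onto a polytope in $\mathbb{R}^{m-1}$. Its Jacobian factor relative to the intrinsic measure on the simplex is the constant $\sqrt{m}$, which is harmless, or can simply be dropped if volume is normalized relative to $\simplex{m}$.

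The projected polytope is given by at most $(n-1)+m$ rational halfspaces: the constraints $\vecs\cdot\boldsymbol{\Delta}(b;a\mid\mu)\ge0$ for $a\in\mathcal H_\mu(b)$, together with $s_i\ge0$ and $\sum_{i<m}s_i\le1$. All coefficients come directly from the input attribute vectors.

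\textbf{Step 3: exact volume in fixed dimension $d=m-1$.} This is the main obstacle. The step needs a bound on the combinatorial complexity of the polytope, not just an algorithm.

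I would enumerate vertices by brute force. Check each $d$-subset of the $N=O(n)$ halfspaces: solve the linear system and test feasibility. This takes $O(N^{d}\cdot N)$ time, which is polynomial for fixed $m$, and every vertex is a rational with polynomially bounded bit size by Cramer's rule.

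From the vertex and facet incidences I would build a triangulation, for example by recursive pulling or coning from a fixed vertex over facets. The upper bound theorem guarantees only $O(N^{\lfloor d/2\rfloor})$ faces, so the triangulation has polynomially many simplices. The volume is then the sum of $|\det|/d!$ over these simplices, computed in exact rational arithmetic.

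Alternatively, one could invoke Lawrence's formula or Barvinok-style results, which give polynomial-time exact volume for fixed dimension. A full-dimensionality check, such as an LP maximizing a slack variable, settles the degenerate zero-volume case.

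\textbf{Step 4: combine.} Summing the costs gives $n\cdot O(n^{O(m)})$ arithmetic operations on polynomial-size rationals. This is polynomial in $n$ for fixed $m$, which proves the theorem. The \#P-hardness cited from~\cite{DyerFrieze88} applies only when the dimension is part of the input, so there is no conflict.
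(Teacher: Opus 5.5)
Your proposal is correct and follows the paper's proof. Both factor $\mathcal P_\mu$ via Lemma~\ref{lem:factorization}, compute each $(m-1)$-dimensional factor's volume exactly in $n^{O(m)}$ time, and multiply. The paper simply invokes Lawrence's cone decomposition for the per-factor step, while you spell out vertex enumeration plus a pulling triangulation. Note that the upper bound theorem does not by itself bound the number of simplices; the number of complete flags does, at most $2N^{d-1}$. Your extra care about the constant $\sqrt{m}$ Jacobian, the bit length of the $n$-fold product, and lower-dimensional factors covers points the paper leaves implicit.
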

\begin{proof}

We show that the special product structure of the robustness region makes its volume exactly computable in polynomial time.
We proceed in three steps.

\medskip
\noindent
\textbf{Step 1: Structure of each factor.}
From Lemma~\ref{lem:factorization}, each factor 
$\mathcal P_\mu(b)$ is the intersection of the simplex~$\simplex{m}$
with at most $O(n)$ halfspaces defined by rational coefficients 
(from the input attribute data).
Hence $\mathcal P_\mu(b)$ is a rational polytope of constant affine dimension $(m-1)$ 
embedded in~$\mathbb{R}^m$.
Since the dimension is constant, its description complexity grows only linearly with~$n$. This linear bound guarantees that the volume of each factor 
can be computed in polynomial time in~$n$.

\medskip
\noindent
\textbf{Step 2: Exact volume of a single factor.}
Each $\mathcal P_\mu(b)$ is a rational polytope of fixed affine dimension $(m-1)$
defined by $O(n)$ linear inequalities.
In fixed dimension, the exact volume of such a polytope
can be computed in $n^{O(m)}$ time using Lawrence’s cone decomposition~\cite{Lawrence91}.
Applying this to each factor yields $\mathrm{Vol}(\mathcal P_\mu(b))$ exactly.

\medskip
\noindent
\textbf{Step 3: Factorization across the $B$–agents.}
By Lemma~\ref{lem:factorization},
the robustness region decomposes as
\[
\mathcal P_\mu \;=\; \prod_{b\in B}\mathcal P_\mu(b),
\qquad
\text{hence}\qquad
\mathrm{Vol}(\mathcal P_\mu) \;=\; \prod_{b\in B} \mathrm{Vol}(\mathcal P_\mu(b)).
\]
Since each local volume $\mathrm{Vol}(\mathcal P_\mu(b))$ can be computed in time $n^{O(m)}$,
and the product involves $|B| = O(n)$ factors,
the total computation of $\mathrm{Vol}(\mathcal P_\mu)$ requires $n^{O(m)}$ time overall.
All arithmetic operations are exact over the rationals.
\end{proof}

\paragraph{Approximate volume for moderate dimensions.}
While the exact algorithm is polynomial for fixed~$m$, 
its running time $n^{O(m)}$ grows rapidly and becomes impractical 
even for moderate dimensions (e.g., $m>5$).  
In such cases, the factorized polyhedral structure of $\mathcal P_\mu$ 
admits efficient randomized approximation.  
Standard hit-and-run sampling methods for convex bodies~\cite{LovaszVempala06HR} 
yield a fully polynomial randomized approximation scheme (FPRAS) 
that estimates $\mathrm{Vol}(\mathcal P_\mu)$ within relative error~$\varepsilon$ 
in time $\tilde{O}((nm)^5/\varepsilon^2)$, 
where $\tilde{O}$ hides polylogarithmic factors.

The robustness region thus supports both exact and approximate volume computation:  
exact volume in $n^{O(m)}$ time (practical for small~$m$), 
and approximation via an FPRAS 
in $\tilde{O}((nm)^5/\varepsilon^2)$ time for higher~$m$.

\end{document}